\newtheorem{theorem}{Theorem}
\title{A discrete-time dynamical model of prey and stage-structured predator with juvenile hunting incorporating negative effects of prey refuge}
\author{Debasish Bhattacharjee$^1$, Nabajit Ray$^2$, Dipam Das$^3$,  Hemanta Kumar Sarmah$^4$ \\
$^{1,3,4}$Department of mathematics, Gauhati University, Assam, India  \\
$^{2}$Department of Mathematics, S.B. Deorah College, Assam, India\\
Emails:$^1$debabh2@gmail.com, $^2$nabajitray@gmail.com, \\ 
$^{3}$pontu.dd@gmail.com, 
$^{4}$hsarmah@gauhati.ac.in
}
\date{}
\begin{document}
\maketitle
\begin{abstract}
   This paper examines a discrete predator-prey model that incorporates  prey refuge and its detrimental impact on the growth of the prey population. Age structure is taken into account for predator species. Furthermore, juvenile hunting as well as prey counter-attack are also  considered. This paper provides a comprehensive analysis of the existence and stability conditions pertaining to all possible fixed points. The analytical and numerical investigation into the occurrence of different bifurcations, such as the Neimark-Sacker bifurcation and period-doubling bifurcation, in relation to various parameters is discussed. The impact of the parameters reflecting prey growth and prey refuge is thoroughly addressed. Numerous numerical simulations are presented in order to validate the theoretical findings.

\end{abstract}

\textbf{2020 AMS Classifications:} 92D25, 92D40, 92D50, 92B05, 39A05, 39A28, 39A30\\

\textbf{Keywords:} Predator–prey, discrete dynamical system, stage structure, prey refuge, bifurcation

\pagenumbering{arabic}

\section{Introduction} One of the most  pivotal mechanism in maintaining the ecological balance of the ecosystem is the dynamic interplay between prey and predator. In recent years, mathematical models have gained significant traction and utility in explicating population dynamics. The predator-prey model has garnered significant attention from researchers in the field of ecology, following the groundbreaking contributions of Lotka \cite{lot} and Volterra \cite{vol} to the field. Subsequent to that, numerous kinds of enhancements for the predator-prey model have been suggested \cite{lon,ni,jan,yas,mat,yang}.

Predominantly, within predator-prey systems, it is commonly assumed that predators within a given population possess uniform predation capacity and fecundity. However, in the real world, several authors believed that predators residing within a specific population can be classed by two fixed ages: juvenile predator and adult predator. Numerous scholarly articles have been dedicated to examining the dynamics of populations structured by stages. In recent times, several researchers have employed stage structure in prey species \cite{zha,liu}, while numerous scholars have utilised stage structure in predator species \cite{xia,kal,Mor} as well. This paper will focus solely on the stage structure of predator species. Morever, during this juvenile phase, predators develop the essential predatory skills required for their survival. Predation during the juvenile phase poses a significant challenge, given that early juvenile predators lack the requisite abilities and expertise in foraging and hunting. In numerous instances, when juvenile predators partake in attacks on their prey, the prey species respond by initiating counter-attacks, leading to the killing of young and inexperienced predators as a means of self-defense\cite{mag,aok,dor}.  Only a small number of scholarly articles have explored the intricacies of juvenile hunting through the application of mathematical models\cite{motive1,motiverel1,motiverel2}. So, juvenile hunting and in response counter-attacks by the prey are also addressed in this work.

Numerous studies \cite{sk,dey} have confirmed that predators employ diverse tactics in order to capture their prey . Likewise, prey species employ diverse strategies\cite{shi,sing,preyrefuge1} to mitigate the rate of predation. Prey refuge is one of them. The utilisation of refuges can afford a certain degree of protection to prey species. Despite its benefits, prey refuge can have negative consequences on prey's growth. The utilisation of refuges by prey carries significant costs, particularly in terms of potential reductions in feeding or mating success due to increased time spent in refuges\cite{preyrefdisad}. As a result, the presence of prey refuge may result in a decrease in growth of prey\cite{done,mar,fra}. In this paper, we consider that prey uses refuge, which has a detrimental effect on their growth. 

In cases where there are populations with overlapping generations, the birth processes take place in a continuous manner. As a result, the interaction between predator and prey is typically represented through the use of ordinary differential equations \cite{muk}. However, it is worth noting that in reality, there are other types of species, such as monocarpic plants and semelparous animals \cite{kot}, that exhibit discrete non-overlapping generations and their births occurexclusively during regular breeding seasons. Their interactions are characterised by difference equations or as discrete-time mappings. The dynamics of discrete-time predator-prey models might reveal greater complexity compared to their continuous-time models\cite{intro1}. When the population size is relatively small, it is appropriate to use discrete models to represent populations, even if some species have a long lifespan and overlapping generations. Additionally, population change is typically examined on a yearly (or monthly, or daily) basis. Hence, it is imperative to examine discrete population dynamical models\cite{intro2}. In recent years, there has been a significant increase in collaboration among ecologists studying discrete dynamical ecological models \cite{recent1,recent2}. 

Kaushik et al. \cite{motive1} considered a mathematical model which is as follows:
\begin{equation}\label{omodel}
\begin{cases}
    \frac{dx}{dt}=rx(1-\frac{x}{k})-\alpha_1 x y- \alpha_2 x z\\
    \frac{dy}{dt}=\mu \alpha_2 x z-\alpha_3 x y-\beta y-\gamma y\\
    \frac{dz}{dt}=\gamma y-m z-\phi z^2
    \end{cases}
\end{equation}
where $x$, $y$, $z$ represent population sizes of prey, juvenile predator, and adult predator respectively.

The objective of this study is to analyse the aforementioned model, taking into account the prey's refuge behaviour and its detrimental impact on prey population growth, within the context of a discrete-time framework. This analysis aims to provide an in-depth investigation of the advantages and disadvantages associated with the prey refuge, a topic that has not been thoroughly explored in the existing scholarly literature. To the best of the authors' knowledge, there has been no prior investigation into a mathematical model that elucidates the adverse effects of prey refuge on the population dynamics of prey.

This paper is structured in the subsequent fashion: section 2 presents the mathematical framework of the system. Section 3 presents the parametric conditions that pertain to the existence and stability of the equilibrium points. The theorems pertaining to Neimark-Sacker bifurcation and period-doubling bifurcation are presented in sections 4 and 5. In section 6, Numerical simulations are provided to validate the analytical results. Finally, section 7 draws a quick conclusion.

\section{Mathematical Modelling}
Kaushik et al. \cite{motive1} examined a mathematical model that describes the dynamics of a prey species and a predator species with stage structure, as presented in equation (\ref{omodel}). It is postulated that both juvenile and adult predators exhibit Holling type I functional response when interacting with prey species. Morever, it is considered that juvenile predators cannot reproduce, only adult predators can.
The model (\ref{omodel}) is further modified to incorporate the anti-predator effect, specifically the utilisation of prey refuge. We assume that prey uses prey refuge to mitigate adult predator attacks. The prey refuge effect has no impact on juvenile hunting.
 This may be due to the fact that the size structure or desire to hunt causes juveniles to exert extra effort to thwart the prey refuge effect, or because the prey species has no fear response to juvenile hunting, and thus the anti-predator behaviour towards juvenile predators is not the prey refuge but the aggressive counter-attack. 
 
  Let n represents the prey refuge constant, such that $nx$ represents the number of prey species that are unaccessible to the adult predator, and adult predators do not concern themselves with the pursuit of this quantity of prey. Consequently, $(1-n)x$ denotes the quantity of prey that are available for consumption by the adult predators. We employ a negative effect of this strategy in terms of diminished growth. We assume that the unavailable preys do not take part in the growth of prey population. Therefore, the total growth of the prey population at any instant of time is $r(1-n)x(1-x/k)$. We also assume that $\phi=0$ in the model (\ref{omodel}) , therefore, the modified system of equation becomes 
\begin{equation}\label{momodel}
\begin{cases}
    \frac{dx}{dt}=r(1-n)x(1-\frac{x}{k})-\alpha_1 x y- \alpha_2(1-n) x z\\
    \frac{dy}{dt}=\mu \alpha_2 (1-n)x z-\alpha_3 x y-\beta y-\gamma y\\
    \frac{dz}{dt}=\gamma y-m z
    \end{cases}
\end{equation}We discretize (\ref{momodel}) by Euler's forward method, and obtain
the discretized model 
\begin{equation}\label{eq3}
    \begin{cases}
    x(t+h)=x(t)+h\{r(1-n)x(t)(1-x(t)/k)-\alpha_1 x(t)y(t)-\alpha_2 (1-n) x(t)z(t)\}\\
y(t+h)=y(t)+h\{\mu \alpha_2(1-n) x(t)z(t) -\alpha_3 x(t)y(t)-\beta y(t)-\gamma y(t)\}\\
z(t+h)=z(t)+h\{\gamma y(t)-mz(t)\}    \end{cases}
\end{equation}
Considering $x(t+h)=x_{t+1}$, $y(t+h)=y_{t+1}$, $z(t+h)=z_{t+1}$, the system (\ref{eq3}) becomes
\begin{equation}\label{eq4}
    \begin{cases}
    x_{t+1}=x_t+h\{r(1-n)x_t(1-x_t/k)-\alpha_1 x_ty_{t}-\alpha_2 (1-n) x_tz_{t}\}\\
y_{t+1}=y_{t}+h\{\mu \alpha_2(1-n) x_tz_{t} -\alpha_3 x_ty_{t}-\beta y_{t}-\gamma y_{t}\}\\
z_{t+1}=z_{t}+h\{\gamma z_{t}-mz_{t}\}   
\end{cases}
\end{equation}
here, the variables $x_t$, $y_t$, and $z_t$ denote  the population sizes of prey, juvenile predator, and adult predator at generation t, where $t\in\mathbf{N}$. $r$ indicates the prey's growth rate, and $k$ represents the system's environmental carrying capacity. The predation rates of juvenile and adult predators are denoted by $\alpha_1$ and $\alpha_2$ respectively, $\mu$ is the conversion efficiency or reproduction rate of the adult predators, $\alpha_3$ is the prey counter-attacking rate to juvenile predators, $\beta$ is the juvenile predators' natural death rate, $\gamma$ is the juvenile predators' maturation rate, and $m$ is the adult predators' depletion rate in the absence of prey, and $n \in (0,1)$ is the coefficient of prey refuge.

\section{Equilibrium points and their stability} This section discusses the existence and stability of all biologically viable equilibrium points. After performing some calculations, all of the equilibrium points that are biologically feasible have been determined. These are vanishing equilibrium point $E_1(0,0,0)$, axial equilibrium point $E_2(k,0,0)$ and the coexisting equilibrium point $E_3=(x^*,y^*,z^*)$,where,
$$x^*=-\frac{m (\beta +\gamma )}{\alpha _3 m+\alpha _2 \gamma  \mu  (n-1)}$$

$$y^*=-\frac{m (n-1) r \left(m \left(\beta +\gamma +\alpha _3 k\right)+\alpha _2 \gamma  k \mu  (n-1)\right)}{k \left(\alpha _1 m-\alpha _2 \gamma  (n-1)\right) \left(\alpha _3 m+\alpha _2 \gamma  \mu  (n-1)\right)}$$,

$$z^*=-\frac{\gamma  (n-1) r \left(m \left(\beta +\gamma +\alpha _3 k\right)+\alpha _2 \gamma  k \mu  (n-1)\right)}{k \left(\alpha _1 m-\alpha _2 \gamma  (n-1)\right) \left(\alpha _3 m+\alpha _2 \gamma  \mu  (n-1)\right)}$$

\subsection{Vanishing Equilibrium($E_1$)} The vanishing equilibrium is $E_1(0,0,0)$. $E_1$ exists for all biologically possible parameter values. It is unstable in nature, as proven by the following theorem.

\begin{theorem}
The vanishing equilibrium $E_1$ is not stable.
\end{theorem}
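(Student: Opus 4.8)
The plan is to linearize the map \eqref{eq4} at the vanishing equilibrium $E_1(0,0,0)$ and examine the spectrum of the Jacobian. Writing the right-hand side of \eqref{eq4} as $F=(F_1,F_2,F_3)$, I would compute the Jacobian matrix $J(x,y,z)$ of partial derivatives and evaluate it at $(0,0,0)$. Since every nonlinear term in $F_1,F_2,F_3$ is at least quadratic in $(x,y,z)$ (the products $x_ty_t$, $x_tz_t$, $x_t^2$ all vanish to first order at the origin), the Jacobian at $E_1$ should be upper/lower triangular, with diagonal entries $1+hr(1-n)$, $1-h(\beta+\gamma)$, and $1-hm$. Hence the eigenvalues are read off directly as $\lambda_1=1+hr(1-n)$, $\lambda_2=1-h(\beta+\gamma)$, $\lambda_3=1-hm$.

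The key observation is then that $\lambda_1=1+hr(1-n)>1$, because $h>0$, $r>0$, and $n\in(0,1)$ so $1-n>0$. A fixed point of a discrete map is locally asymptotically stable only if every eigenvalue of the linearization satisfies $|\lambda|<1$; here $|\lambda_1|>1$, so $E_1$ cannot be stable — in fact it is a saddle or a source depending on the other two eigenvalues, but for the statement as given it suffices to exhibit one eigenvalue of modulus exceeding one. I would state this as the standard linear-stability criterion for maps and conclude instability of $E_1$.

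The argument is essentially a one-line eigenvalue computation, so there is no serious obstacle. The only point requiring a little care is making sure the Jacobian really is triangular at the origin — i.e. checking that the cross-terms $\partial F_1/\partial y$, $\partial F_1/\partial z$, $\partial F_2/\partial x$, $\partial F_2/\partial z$, $\partial F_3/\partial y$ all vanish when $(x,y,z)=(0,0,0)$, which they do since each is proportional to one of the coordinates. One should also note the apparent typo in the third equation of \eqref{eq4} (it reads $\gamma z_t$ where the model \eqref{eq3} has $\gamma y_t$); this does not affect the linearization at $E_1$ since the relevant diagonal entry $\partial F_3/\partial z = 1-hm$ is unchanged, but I would work from the intended form in \eqref{eq3}. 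I would present the proof by displaying $J(E_1)$, listing its three eigenvalues, and remarking that $|\lambda_1|=|1+hr(1-n)|>1$ forces instability.
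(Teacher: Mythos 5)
Your proposal is correct and matches the paper's proof: both compute the Jacobian at $E_1(0,0,0)$, read off the eigenvalues $1+hr(1-n)$, $1-h(\beta+\gamma)$, $1-hm$, and conclude instability from the eigenvalue $1+hr(1-n)>1$. Your additional remarks (triangularity of the Jacobian, the typo in the third equation of the discretized system) are sound but do not change the argument.
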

\begin{proof}
The eigenvalues of the Jacobian matrix at $E_1(0,0,0)$ are given by $\lambda_1=1-h m$, $\lambda_2= h (r-n r)+1$, $\lambda_3=1-h (\beta +\gamma )$. It is obvious that $|\lambda_2|>1$ i.e., the equilibrium point $E_1$ is a saddle point. Morever, $|\lambda_1|>1$ and $|\lambda_3|>1$ i,e the fixed point $E_1$ is a source (repellor) if $h > \frac{2}{m}$ and $m < \gamma$. Hence proved. 
\end{proof}

\subsection{Axial equilibrium($E_2$):} The axial equilibrium point is given by $E_2(k,0,0)$. Clearly $E_2$ exists for all possible parameter values of the system (\ref{eq4}). The following theorem demonstrates that, under certain parametric conditions, the axial equilibrium point $E_2$ exhibits stability.

\begin{theorem}\label{th2}
 The axial equilibrium $E_2(k,0,0)$ is stable if and only if (\romannum{1}) $m < \frac{\beta + \gamma}{3}$, (\romannum{2}) $r < -\frac{2}{(-h + h n)}$, (\romannum{3}) $\beta < \gamma$,(\romannum{4}) $\mu < \frac{-m \beta - m \gamma - 
  k m \alpha_3}{-k \gamma \alpha_2 + 
  k n \gamma \alpha_2}$, and
  (\romannum{5}) $ h \le \frac{2}{m + \beta + \gamma + k \alpha_3}$.
\end{theorem}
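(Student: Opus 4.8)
The plan is to linearise the map (\ref{eq4}) at the axial fixed point $E_2(k,0,0)$ and extract an explicit Jacobian, then apply the standard stability criterion for fixed points of discrete planar/three-dimensional maps: all eigenvalues must lie strictly inside the unit disc. First I would compute the Jacobian $J(E_2)$ by differentiating the three right-hand sides of (\ref{eq4}) with respect to $x_t,y_t,z_t$ and evaluating at $(k,0,0)$. Because the second and third equations contain only terms that vanish when $y=z=0$ except for the linear $y$- and $z$-couplings, I expect $J(E_2)$ to be block-lower-triangular (or to have one eigenvalue that decouples immediately from the $x$-row), so that one eigenvalue reads off as $\lambda_1 = 1 + h\bigl(r(1-n)(1 - 2k/k) - 0\bigr) = 1 - h r(1-n)$, coming from the prey equation, and the remaining two eigenvalues are those of the $2\times 2$ block governing $(y,z)$ at $x=k$, namely the matrix with entries built from $-h(\alpha_3 k + \beta + \gamma)$, $h\mu\alpha_2(1-n)k$, $h\gamma$, and $1-hm$ on the diagonal.

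Next I would impose $|\lambda_1|<1$, which gives $-1 < 1 - h r(1-n) < 1$, i.e. $0 < h r(1-n) < 2$; since $n\in(0,1)$ makes $1-n>0$, the informative half is $h r(1-n) < 2$, which is condition (\romannum{2}) after rearrangement (the paper writes it in the sign-flipped form $r < -2/(-h+hn)$). For the $2\times2$ block I would write its characteristic polynomial as $\lambda^2 - T\lambda + D$ with trace $T = 2 - h(m+\beta+\gamma+k\alpha_3)$ and determinant $D = (1-hm)(1 - h(\beta+\gamma+k\alpha_3)) - h^2\gamma\mu\alpha_2(1-n)k$, and then apply the Jury conditions for a quadratic: (a) $1 - T + D > 0$, (b) $1 + T + D > 0$, (c) $D < 1$. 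Condition (a) factors through the product of the two "shifted" eigenvalue expressions and should reduce — after using $1-n>0$ — to the requirement that $\mu$ be smaller than the threshold in (\romannum{4}); condition (c) $D<1$ together with the positivity of the relevant factors should yield conditions (\romannum{1}) $m<(\beta+\gamma)/3$ and (\romannum{3}) $\beta<\gamma$; and condition (b), the period-doubling boundary, should collapse to the step-size restriction (\romannum{5}), $h \le 2/(m+\beta+\gamma+k\alpha_3)$. I would carry these out in the order (\romannum{2}) first (the decoupled eigenvalue), then the Jury triple for the block, matching each inequality to one of (\romannum{1}), (\romannum{3}), (\romannum{4}), (\romannum{5}).

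The main obstacle I anticipate is the bookkeeping in conditions (a) and (c) for the $2\times2$ block: the determinant $D$ mixes the coupling term $h^2\gamma\mu\alpha_2(1-n)k$ with the product $(1-hm)(1-h(\beta+\gamma+k\alpha_3))$, and isolating clean necessary-and-sufficient inequalities in the separate parameters requires that the two diagonal factors $1-hm$ and $1-h(\beta+\gamma+k\alpha_3)$ have controlled signs — which is exactly why (\romannum{5}) (and implicitly $hm<1$, guaranteed once (\romannum{1}) and (\romannum{5}) hold) must be invoked partway through rather than at the end. I would need to check carefully that the stated five conditions are not merely sufficient but also necessary, i.e. that no Jury inequality is automatically implied by the others; showing the list is irredundant (or noting which implications hold) is the delicate part, whereas the algebra of evaluating $J(E_2)$ and writing down $T$ and $D$ is routine.
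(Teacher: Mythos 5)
Your setup is sound and, up to the point of writing down the Jacobian, coincides with the paper's: the paper also computes $J_{e2}$, reads off the decoupled eigenvalue $\lambda_4=1+h(n-1)r$ from the first row, and treats the lower-right $2\times 2$ block separately (though it writes the two remaining eigenvalues explicitly via the quadratic formula with discriminant $\theta=(-m+\beta+\gamma+k\alpha_3)^2-4k(-1+n)\gamma\mu\alpha_2$ rather than using Jury conditions). Your handling of $|\lambda_1|<1$ correctly recovers condition (\romannum{2}), and your Jury condition (a) does reduce cleanly: $1-T+D=h^2\left[m(\beta+\gamma+k\alpha_3)-k(1-n)\gamma\mu\alpha_2\right]$, which is positive precisely when (\romannum{4}) holds.

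The gap is in the last step, where you assert that (b) ``collapses to'' (\romannum{5}) and (c) ``yields'' (\romannum{1}) and (\romannum{3}). Neither reduction works. Explicitly, $1+T+D=(2-hm)\bigl(2-h(\beta+\gamma+k\alpha_3)\bigr)-h^2k(1-n)\gamma\mu\alpha_2$ and $D-1=h^2 m(\beta+\gamma+k\alpha_3)-h(m+\beta+\gamma+k\alpha_3)-h^2k(1-n)\gamma\mu\alpha_2$; these involve $h$, $\mu$, $\alpha_2$ and $n$ in an essential way and cannot be equivalent to the $h$-free, $\mu$-free inequalities $m<(\beta+\gamma)/3$ and $\beta<\gamma$, nor is $(2-hm)\bigl(2-h(\beta+\gamma+k\alpha_3)\bigr)>h^2k(1-n)\gamma\mu\alpha_2$ the same statement as $h\le 2/(m+\beta+\gamma+k\alpha_3)$. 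In other words, the exact necessary-and-sufficient criteria your method produces (which are the correct ones) are not the five conditions of the theorem, so the ``if and only if'' cannot be established along the route you describe --- nor, indeed, at all as stated. Be aware that the paper's own proof does not bridge this gap either: it lists the eigenvalues and simply asserts that their moduli are less than one ``when the conditions listed below are satisfied,'' with no derivation. Your Jury-condition computation is the right way to obtain a correct version of the statement; carrying it to completion shows that conditions (\romannum{1}), (\romannum{3}) and (\romannum{5}) would have to be replaced by the inequalities $|D|<1$ and $1+T+D>0$ written out above.
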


\begin{proof}
    The Jacobian matrix of the model ((\ref{eq4})) at the axial equilibrium point $E_2(k,0,0)$ is 
    
    $$J_{e2}=\begin{bmatrix}
 1 + h (-1 + n) r & -h k\alpha_{1} & h k (-1 + n) \alpha_{2} \\
 0 & 1 - h (\beta + \gamma) - h k \alpha_{3} & -h k (-1 + n) \mu \alpha_{2} \\
0 & h \gamma & 1 - h m
\end{bmatrix}$$

  Now, the eigenvalues of the Jacobian matrix $J_{e2}$ are $\lambda_4=1 + h (-1 + n) r$, $\lambda_5=\frac{1}{2} (2 - h m - h \beta - h \gamma - h k\alpha_3 - h \sqrt{\theta} )$, and $\lambda_6=\frac{1}{2} (2 - h m - h \beta - h \gamma - h k\alpha_3 + h \sqrt{\theta})$. Here, $\theta=-4 k (-1 + n) \gamma \mu \alpha_2 + (-m + \beta + \gamma + k \alpha_3)^2$. The stability of the fixed point $E_2(k,0,0)$ is reliant on the absolute values of the eigenvalues of the Jacobian matrix evaluated at the axial equilibrium point $E_2$. The axial equilibrium $E_2(k,0,0)$ is stable if $|\lambda_4|<1$, $|\lambda_5|<1$ and $|\lambda_6|<1$,  which is possible when the conditions listed below are satisfied\\
    (\romannum{1}) $m < \frac{\beta + \gamma}{3}$, (\romannum{2}) $r < -\frac{2}{(-h + h n)}$, (\romannum{3}) $\beta < \gamma$,(\romannum{4}) $\mu < \frac{-m \beta - m \gamma - 
  k m \alpha_3}{-k \gamma \alpha_2 + 
  k n \gamma \alpha_2}$, and
  (\romannum{5}) $ h \le \frac{2}{m + \beta + \gamma + k \alpha_3}$. Hence, proved.
 
\end{proof}

\begin{figure}[H]
     \centering
     \begin{subfigure}[F]{0.45\textwidth}
         \centering
         \includegraphics[width=\textwidth]{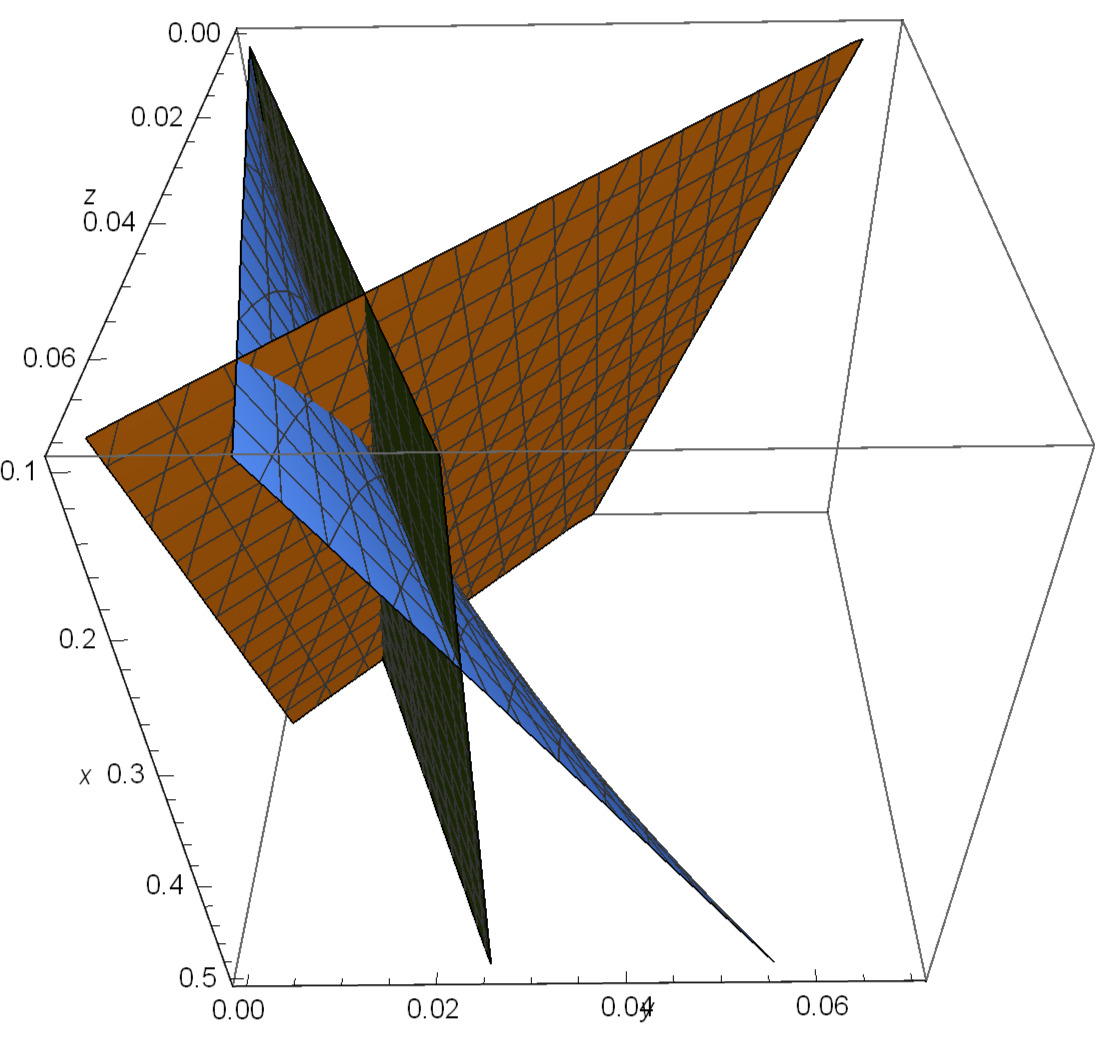}
         \caption{\emph{ Existence of the coexisting fixed point $E_3$,}}
         \label{1a}
     \end{subfigure}
     \hfill
     \begin{subfigure}[F]{0.45\textwidth}
         \centering
         \includegraphics[width=\textwidth]{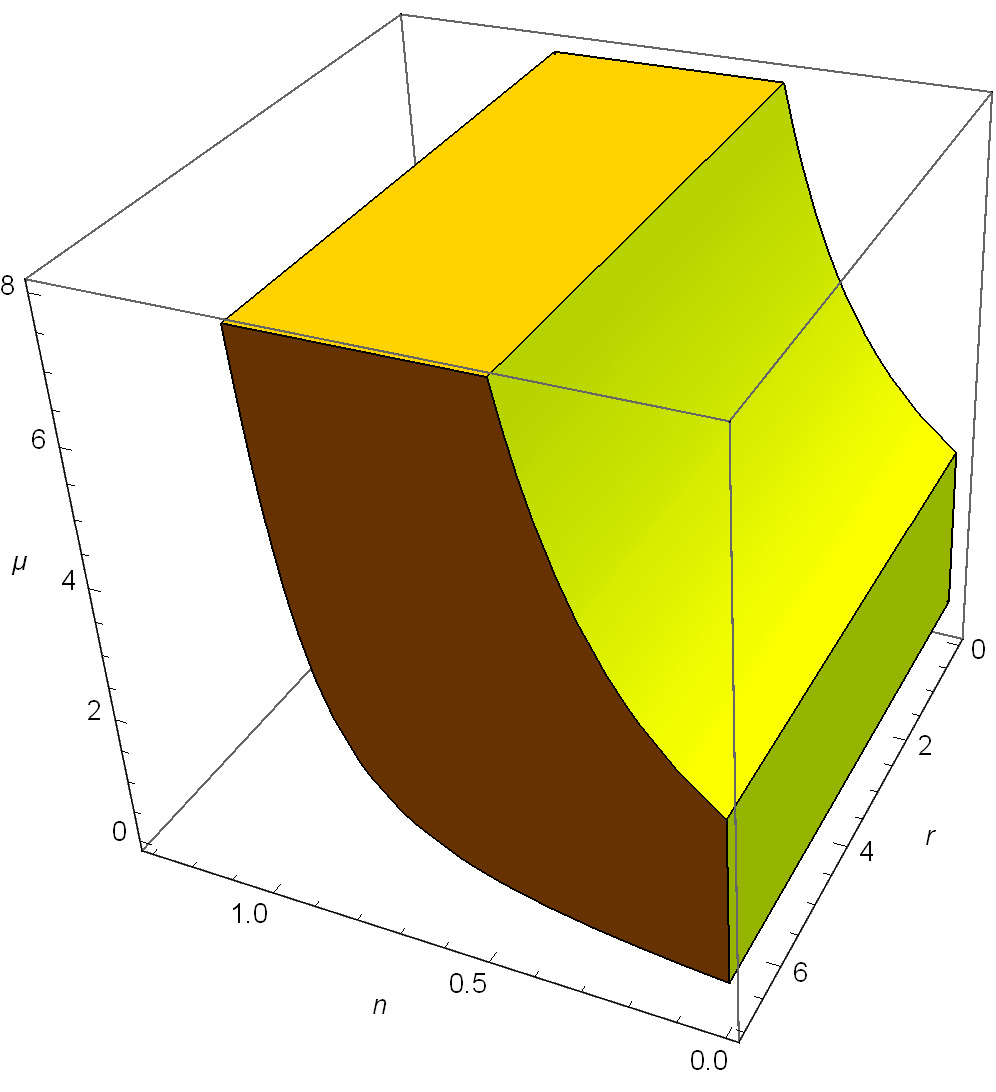}
         \caption{\emph{ Stability region of the coexisting fixed point $E_3$ in $nr\mu$-space}}
         \label{1b}
     \end{subfigure}
        \caption{\emph{The existence and stability of the coexisting fixed point $E_3$ are displayed using the parameter values stated in table (\ref{t1}).}}
        \label{figure0}
        \end{figure}
\subsection{Coexisting equilibrium($E_3$):} The coexisting equilibrium $E_3(x^*,y^*,z^*)$ exists if the following conditions are satisfied  \\\\
(\romannum{1}) $\gamma > -\frac{m \alpha_3}{m - \mu \alpha_2 + n \mu \alpha_2}$, (\romannum{2}) $\mu > -\frac{m}{-\alpha_2 + n \alpha_2}$, (\romannum{3}) $\beta < \frac{-m \gamma + \gamma \mu \alpha_2 - 
  n \gamma \mu \alpha_2 - m \alpha_3}{m}$, and (\romannum{4}) $0<n<1$. \\

The existence of the coexisting fixed point $E_3$ is evident from figure (\ref{1a}). The x-nullclines, y-nullclines, and z-nullclines of the model (\ref{eq4}) are depicted in figure (\ref{1a}), with the x-nullclines shown in brown, the y-nullclines in blue, and the z-nullclines in red. The parameter values used for this illustration are provided in table (\ref{t1}). The coexisting fixed point $E_3(0.232013, 0.0154654, 0.0419775)$ represents the intersection point of these nullclines. The Jacobian matrix of the system (\ref{eq4}) at any point $(x,y,z)$ is given by\\

$$J=\begin{bmatrix}
1 + h(-\frac{-((1 - n) r x)}{k}+(1 - n) r (1 - \frac{x}{k})-y\alpha_{1} - (1 - n)z\alpha_{2}) &-(1 + \gamma) r_2 & -h (1 - n) x \alpha_{2}]\\
h ((1 - n) z \mu \alpha_{2} - y\alpha_{3}) & 1 + h (-\beta - \gamma - x \alpha_{3}) & h (1 - n) x \mu \alpha_{2} \\
0 & h \gamma & 1 - h m
\end{bmatrix}$$
\\

Now, the Jacobian matrix of the system (\ref{eq4}) at the interior equilibrium point $E_3=(x^*,y^*,z^*)$ is\\

$$J_{e3}=\begin{bmatrix}
\frac{k (-1 + n) \gamma \mu \alpha_{2} + 
 m (-h (-1 + n) r (\beta + \gamma) + k \alpha_{3})}{k ((-1 + n) \gamma \mu \alpha_{2} + 
   m \alpha_{3})} & \frac{h m (\beta + \gamma) \alpha_1}{(-1 + n) \gamma \mu \alpha_{2} + 
 m \alpha_{3}} & \frac{h m (1 - n) (\beta + \gamma)\alpha_{2}}{(-1 + n) \gamma \mu \alpha_{2} + 
 m \alpha_{3}} \\
 \frac{h (-1 + n) r (k (-1 + n) \gamma \mu \alpha_{2} + 
   m (\beta + \gamma + k \alpha_{3}))}{k (m \alpha_{1} - (-1 + n) \gamma \alpha_{2})} & 1- \frac{h (-1 + n) \gamma (\beta + \gamma) \mu \alpha_{2}}{(-1 + n) \gamma \mu \alpha_{2} + 
 m \alpha_{3}} & \frac{h m (-1 + n) (\beta + \gamma) \mu \alpha_2}{(-1 + n) \gamma \mu \alpha_{2} + 
 m \alpha_{3}}\\
0 & h \gamma & 1 - h m
\end{bmatrix}$$
\\

The characteristic equation of the matrix $J_{e3}$ is as follows:
\begin{equation}\label{eq5}
\lambda^3+p_1\lambda^2+p_2\lambda+p_3=0
\end{equation}
where, \\\\
$p_1=\frac{k (h (\beta +\gamma +m+(n-1) r)-3)+h k (\alpha _2 (z^*-n z^*)+\alpha _3 x^*+\alpha _1 y^*)-2 h (n-1) r x^*}{k}$\\
$p_2=\frac{k (h^2 (m (\beta +\gamma +(n-1) r)+(n-1) r (\beta +\gamma ))-2 h (\beta +\gamma +m+(n-1) r)+3)+\delta_1}{k}$\\
$p_3=\frac{(h m-1) (k (h (n-1) r-1)-2 h (n-1) r x^*) (h (\beta +\gamma )+\alpha _3 h x^*-1)+\delta_2}{k}$,\\
$\delta_1=h (\alpha _3 x^* (k (h m+h (n-1) r-2)-\alpha _2 h k (n-1) z^*-2 h (n-1) r x^*)-\alpha _2 k (n-1) (z^* (h (\beta +\gamma +m)-2)
+\gamma  (-h) \mu  x^*+\alpha _1 h \mu  x^* z^*)+\alpha _1 k y^* (h (\beta +\gamma +m)-2))-2 h (n-1) r x^* (h (\beta +\gamma +m)-2)$, and \\
$\delta_2=\alpha _2 h (n-1) (\alpha _3 h k x^* (-h m z^*+\gamma  h y^*+z^*)-k z^* (h m-1) (h (\beta +\gamma )-1)+\gamma  (-h) \mu  x^* (-h k (n-1) r+2 h (n-1) r x^*+k))+\alpha _1 h k (\alpha _2 h \mu  (n-1) x^* (-h m z^*+\gamma  h y^*+z^*)+y^* (h m-1) (h (\beta +\gamma )-1))$.\\

The subsequent theorem provides proof for the stability of the fixed point, $E_3=(x^*,y^*,z^*)$.

\begin{theorem}\label{th3}
The coexisting equilibrium $E_3$ is locally stable if and only if $|p_1+p_3|<1+p_2$, $p_2-p_1 p_3<1-p_3^2$, and $|p_1-3p_3|< 3-p_2$.
\end{theorem}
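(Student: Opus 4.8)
The statement is a root--location condition in disguise, so my plan is to reduce it to the Jury (Schur--Cohn) criterion for cubic polynomials and then massage the resulting inequalities into the stated form. Local asymptotic stability of $E_3$ is equivalent to demanding that the spectral radius of $J_{e3}$ be strictly less than one, i.e.\ that every root $\lambda$ of the characteristic equation $F(\lambda)=\lambda^{3}+p_{1}\lambda^{2}+p_{2}\lambda+p_{3}=0$ from (\ref{eq5}) satisfy $|\lambda|<1$. Hence it suffices to characterise, purely in terms of the real coefficients $p_{1},p_{2},p_{3}$, when $F$ is a Schur polynomial; the explicit (and lengthy) expressions for $p_{1},p_{2},p_{3}$ in terms of the model parameters play no role here.

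I would next recall the Jury conditions for a monic cubic $F(\lambda)=\lambda^{3}+p_{1}\lambda^{2}+p_{2}\lambda+p_{3}$: all three roots lie in the open unit disc if and only if (a)~$F(1)=1+p_{1}+p_{2}+p_{3}>0$, (b)~$-F(-1)=1-p_{1}+p_{2}-p_{3}>0$, (c)~$|p_{3}|<1$, and (d)~$|1-p_{3}^{2}|>|p_{2}-p_{1}p_{3}|$. These come from writing down the Jury array of a degree--three polynomial (equivalently, the Schur--Cohn determinant test); I would either include the short computation of that array or simply cite a standard difference--equations reference, since this is classical.

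The remaining work is to show that (a)--(d) are equivalent to the three inequalities in the statement. Conditions (a) and (b) together read $-(1+p_{2})<p_{1}+p_{3}<1+p_{2}$, which is exactly $|p_{1}+p_{3}|<1+p_{2}$ and which also forces $1+p_{2}>0$; that is the first inequality. Given (c) we have $p_{3}^{2}<1$, so (d) becomes $1-p_{3}^{2}>|p_{2}-p_{1}p_{3}|$, and its ``$+$'' branch $1-p_{3}^{2}>p_{2}-p_{1}p_{3}$ is precisely the second inequality $p_{2}-p_{1}p_{3}<1-p_{3}^{2}$. What is left is to reconcile the leftover content---namely $|p_{3}|<1$ together with the ``$-$'' branch $1-p_{3}^{2}>p_{1}p_{3}-p_{2}$---with the third inequality $|p_{1}-3p_{3}|<3-p_{2}$. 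One direction is easy: from $|p_{1}+p_{3}|<1+p_{2}$ and $|p_{1}-3p_{3}|<3-p_{2}$ one gets $4|p_{3}|\le|p_{1}+p_{3}|+|p_{1}-3p_{3}|<4$, so $|p_{3}|<1$ is recovered for free. For the converse, and for turning the ``$-$'' branch into $|p_{1}-3p_{3}|<3-p_{2}$, I would run a short case analysis on the signs of $p_{1}-3p_{3}$ and of $p_{1}p_{3}-p_{2}$, using (a), (b) and the already-established first two inequalities to rule out the impossible sign patterns.

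The main obstacle is exactly this last equivalence: the first two inequalities in the statement are near-verbatim rewritings of single Jury conditions, but $|p_{1}-3p_{3}|<3-p_{2}$ is a repackaging of $|p_{3}|<1$ and the second branch of (d), so it genuinely requires the sign bookkeeping above rather than a one-line manipulation. If one is willing to quote the cubic Schur--Cohn lemma already in the form of the three stated inequalities (it is standard in the discrete-time population-dynamics literature), the proof collapses to the first paragraph together with a citation, and the coefficient-level verification can be relegated to a remark.
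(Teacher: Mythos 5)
Your approach is sound and the claimed equivalence is true, but it is worth knowing that the paper's entire proof of this theorem is the single line ``Please refer to Theorem 3.2 in \cite{ali}'' --- i.e.\ exactly the fallback you describe in your last paragraph. So you have taken a genuinely different (self-contained) route: reducing local stability to the Schur property of the cubic and then deriving the three stated inequalities from the Jury conditions $F(1)>0$, $-F(-1)>0$, $|p_3|<1$, $|p_2-p_1p_3|<1-p_3^2$. Your bookkeeping for the easy pieces is right (conditions (a)--(b) are exactly $|p_1+p_3|<1+p_2$; the ``$+$'' branch of (d) is the second inequality; $4|p_3|\le|p_1+p_3|+|p_1-3p_3|<4$ recovers $|p_3|<1$), and the step you flag as the main obstacle does go through, in fact more cleanly than a sign case analysis: for the reverse direction the ``$-$'' branch of (d) is immediate from $p_3(p_1+p_3)\le|p_3|\,|p_1+p_3|<1+p_2$; for the forward direction, combining (b) in the form $p_2>p_1+p_3-1$ with (d) in the form $p_2<1-p_3^2+p_1p_3$ gives $p_1(1-p_3)<(1-p_3)(2+p_3)$, hence $p_1<2+p_3$, and then
\begin{equation*}
3-p_2-(p_1-3p_3)\;>\;(1+p_3)\,(2+p_3-p_1)\;>\;0,
\end{equation*}
with the other half of $|p_1-3p_3|<3-p_2$ following by the symmetry $\lambda\mapsto-\lambda$, $(p_1,p_2,p_3)\mapsto(-p_1,p_2,-p_3)$. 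What your route buys is a verifiable, reference-free proof of a lemma the paper treats as a black box; what it costs is precisely this last equivalence, which you have only sketched and which a referee would ask you to write out (the two displayed implications above suffice). One presentational caution: state explicitly that ``locally stable'' here means all eigenvalues of $J_{e3}$ lie strictly inside the unit circle, since the strict inequalities characterise asymptotic stability of the linearisation rather than stability in the Lyapunov sense.
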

\begin{proof}
Please refer to  Theorem 3.2 in \cite{ali}.
\end{proof}

\section{Neimark-Sacker bifurcation} The Neimark-Sacker bifurcation is a well-known bifurcation phenomenon that occurs in dynamical systems when a stable limit cycle experiences a loss of stability, leading to the emergence of an invariant torus or periodic cycles. In order to analyse the Neimark-Sacker bifurcation phenomenon concerning the coexisting equilibrium $E_3$, it is necessary to utilise the explicit criterion \cite{wen} outlined below.

\begin{theorem}\label{t4} \cite{wen}
Given a discrete dynamical system of l dimensions: $Z_{u+1}=f_{v}(Z_{u})$, where $v \in R$ denotes a bifurcation parameter. Assume that $ Z^{*}$ is a fixed point of $f_{v}$. Then, the
characteristic polynomical for Jacobian matrix $J(Z^{*}) = (a_{ij})_{l \times l}$ of l-dimensional
map $f_{v}$ is as follows:
$$P_{v}(\lambda)=\lambda^{l}+b_1\lambda^{l-1}+b_2\lambda^{l-2}+b_{l-1}\lambda+b_{l}$$
where, $b_{i}=b_{i}(v,c)$,i=1,2,3,...l, $c$ represents either a control parameter or another parameter that requires determination. Let us consider, a sequence of determinants of the type $(\Delta_{i}^{\pm}(v,c))_{i=0}^{l}$ such that $\Delta_{0}^{\pm}(v,c)=1$, and $\Delta_{i}^{\pm}(v,c)=det(K_1 \pm K_2)$,where \\

$K_1=\begin{bmatrix}
1 & b_1 & b_{2}&.....&b_{l-1}\\
0 & 1 &b_{1}&.....& b_{l-2} \\
0 & 0 &1&.....& b_{l-3} \\
...&...&...& ...&... \\
0 & 0 & 0 &...& 1
\end{bmatrix}$, $K_2=\begin{bmatrix}
b_{l-i+1} & b_{l-i+2} &b_{l-1}&.....&b_{l}\\
b_{l-i+2} & b_{l-i+3} &b_{l}&.....& 0 \\
...& ... &...& ... \\
b_{l-1}&b_{l}&...&0&0 \\
b_{l} & 0 & ... & 0&0
\end{bmatrix}$\\\\
Furthermore, it is assumed that the subsequent criteria are true:\\\\
\textbf{1st criteria:} Eigenvalue requirement: $\Delta_{l-1}^{-}(v_{0},c)=0$, $\Delta_{l-1}^{+}(v_{0},c)>0$, $P_{v_{0}}(1)>0$, $(-1)^{l}P_{v_{0}}(-1)>0$, $\Delta_{i}^{\pm}(v_{0},c)>0$, for i=l-3,l-5,...,2 (or 1), when l is odd or even, respectively.\\\\
\textbf{2nd criteria:} Transversality requirement: $\frac{d}{dv}(\Delta_{l-1}^{-}(v,c))_{v=v_{0}}\neq 0$.\\\\
\textbf{3rd criteria:} Non-resonance condition: $cos(\frac{2\pi}{j})\neq \Psi,$ or resonance condition $cos(\frac{2\pi}{j}) = \Psi$, where j= 3, 4 ,5,.... and $\Psi=1-(0.5P_{v_{0}}(1)\Delta_{l-3}^{-}(v_{0},c)/\Delta_{l-2}^{+}(v_{0},c))$; then, at a critical point $v_{0}$ , Neimark-Sacker bifurcation takes place.
\end{theorem}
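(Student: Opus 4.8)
The plan is to read this statement not as something needing a self-contained proof, but as a \emph{dictionary} between two classical ingredients: the Schur--Cohn (Jury) stability test for real polynomials, and the centre-manifold version of the Neimark--Sacker bifurcation theorem. No new analytic estimate is required; the whole content is that the three criteria are exactly the coefficient-level translations of the hypotheses of the classical theorem, which is the substance of \cite{wen}.

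First I would state the centre-manifold Neimark--Sacker theorem in the form I want to reduce to: if at $v=v_{0}$ the Jacobian $J(Z^{*})$ has a simple pair of complex-conjugate eigenvalues $\lambda(v_{0}),\overline{\lambda(v_{0})}=e^{\pm i\theta_{0}}$ on the unit circle, the remaining $l-2$ eigenvalues lie strictly inside the open unit disc, the modulus crosses transversally, $\tfrac{d}{dv}|\lambda(v)|\big|_{v_{0}}\neq 0$, and the strong resonances are avoided, $\lambda(v_{0})^{j}\neq 1$ for $j=1,2,3,4$, then an invariant closed curve bifurcates from $Z^{*}$. The task is then to show that criterion~1 encodes the spectral configuration, criterion~2 the transversality, and criterion~3 the non-resonance.

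For criterion~1 I would invoke the Jury framework: the matrices $K_{1}\pm K_{2}$ are precisely the inner (symmetric/antisymmetric) matrices of the Jury table, so the determinants $\Delta_{i}^{\pm}$ detect whether roots sit inside versus on $|\lambda|=1$. The conditions $P_{v_{0}}(1)>0$ and $(-1)^{l}P_{v_{0}}(-1)>0$ exclude the real resonances $\lambda=\pm1$ (disposing of $j=1,2$); the positivity of the listed $\Delta_{i}^{\pm}(v_{0},c)$ guarantees that $l-2$ eigenvalues are strictly inside; and the pair $\Delta_{l-1}^{-}(v_{0},c)=0$, $\Delta_{l-1}^{+}(v_{0},c)>0$ forces exactly one conjugate pair onto the circle, simply. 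The clean route is the bilinear transform $\lambda=\tfrac{1+w}{1-w}$, which carries the open disc to the open left half-plane and turns these determinant conditions into Li\'enard--Chipart/Routh--Hurwitz conditions for the transformed polynomial, with $\Delta_{l-1}^{-}=0$ corresponding to a pure-imaginary pair of the Hurwitz polynomial. For criterion~2 one writes $|\lambda(v)|^{2}-1$ as a nonzero smooth multiple of $\Delta_{l-1}^{-}(v,c)$ near $v_{0}$, so that $\tfrac{d}{dv}\Delta_{l-1}^{-}|_{v_{0}}\neq 0$ is equivalent to the transversal crossing via the implicit function theorem. For criterion~3 I would establish the identity $\Psi=1-\tfrac12 P_{v_{0}}(1)\,\Delta_{l-3}^{-}/\Delta_{l-2}^{+}=\operatorname{Re}\lambda(v_{0})=\cos\theta_{0}$, so that $\cos(2\pi/j)\neq\Psi$ is literally $\lambda(v_{0})\neq e^{\pm 2\pi i/j}$; together with the already-excluded $j=1,2$, this rules out $j=1,2,3,4$ and leaves the generic scenario.

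The main obstacle is the eigenvalue-requirement step: proving rigorously that the precise pattern of positive $\Delta_{i}^{\pm}$ together with the single vanishing $\Delta_{l-1}^{-}$ pins down the configuration ``one simple conjugate pair on $|\lambda|=1$, the other $l-2$ strictly inside'' and excludes every other degeneration (in particular a coincidental double real root at $\pm1$, which is where the sign of $\Delta_{l-1}^{+}$ does the separating work). This needs a careful induction through the Jury inner determinants, or equivalently tracking how the Routh--Hurwitz minors of the Cayley-transformed polynomial degenerate. Deriving the closed form $\Psi=\cos\theta_{0}$ in terms of sub-determinants is the second nontrivial piece of bookkeeping. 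Once both are in place, the statement follows by quoting the classical Neimark--Sacker theorem, exactly as in \cite{wen}.
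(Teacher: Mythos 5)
This statement is quoted verbatim from Wen's paper (reference \cite{wen}); the present paper supplies no proof of it at all, so there is no in-paper argument to compare yours against. That said, your reconstruction follows exactly the route of the cited source: the $\Delta_{i}^{\pm}=\det(K_1\pm K_2)$ are the Schur--Cohn (Jury) inner determinants, the eigenvalue requirement is the coefficient-level encoding of ``one simple conjugate pair on the unit circle, the rest strictly inside'' (most transparently seen through the Cayley transform $\lambda=\frac{1+w}{1-w}$ to a Routh--Hurwitz problem), the transversality condition works because $\Delta_{l-1}^{-}(v,c)$ is a nonvanishing multiple of $|\lambda(v)|^{2}-1$ near $v_{0}$, and the quantity $\Psi$ equals $\operatorname{Re}\lambda(v_{0})$ --- for the $l=3$ case used later in this paper one can check directly that with $\lambda_{1,2}=e^{\pm i\theta}$, $\lambda_{3}=\mu$ one gets $P_{v_0}(1)=2(1-\mu)(1-\cos\theta)$ and $\Delta_{1}^{+}=1-\mu$, hence $\Psi=\cos\theta$, confirming your claimed identity. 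Be aware, though, that what you have written is a plan rather than a proof: the two items you flag as ``main obstacles'' --- the induction through the inner determinants that pins down the spectral configuration and rules out degenerate coincidences, and the closed-form derivation of $\Psi=\cos\theta_{0}$ in general dimension --- are precisely the entire mathematical content of the result, and they are left undone. As a blind identification of the correct strategy the proposal is sound; as a self-contained proof it is incomplete, which is acceptable here only because the paper itself treats the theorem as an imported black box.
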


The following theorem offers criteria that establish the occurrence of Neimark-Sacker bifurcation for system (\ref{eq4}) with respect to the bifurcation parameter $r$.

\begin{theorem}\label{t5}
    The fixed point $E_3$ experiences Neimark-Sacker bifurcation at the critical value $r=r^{ns}$ depends on the satisfaction of the specified conditions.\\
(\romannum{1}) $1-p_2+ p_3(p_1-p_3) = 0$\\
(\romannum{2}) $1+p_2-p_3(p_1+p_3) > 0$\\
(\romannum{3}) $1+p_1+p_2+p_3 > 0$\\
(\romannum{4}) $1-p_1+p_2-p_3 > 0$\\
(\romannum{5}) $\frac{d}{dr}(1-p_2+ p_3(p_1-p_3))_{r=r^{ns}}\neq 0$\\
(\romannum{6}) $cos(\frac{2\pi}{j})\neq 1-\frac{1+p_1+p_2+p_3}{2(1+p_3)},$ j=3, 4, 5,...,\\
where, $p_1$, $p_2$ represents the coefficients of 
 $\lambda^2$, $\lambda$ , and $p_3$ represents the constant term in the equation (\ref{eq5}). $r^{ns}$ is a real root of the equation $1-p_2+ p_3(p_1-p_3) = 0$.
\end{theorem}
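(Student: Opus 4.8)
The plan is to apply the explicit Neimark-Sacker criterion of Theorem \ref{t4} with dimension $l=3$, bifurcation parameter $v=r$, fixed point $Z^{*}=E_{3}$, and characteristic polynomial (\ref{eq5}), so that $b_{1}=p_{1}$, $b_{2}=p_{2}$, $b_{3}=p_{3}$. Here each $p_{i}=p_{i}(r)$ depends on $r$ both explicitly and through the coordinates of $E_{3}$, and $r^{ns}$ is to be taken as a real root of $1-p_{2}+p_{3}(p_{1}-p_{3})=0$. The whole argument then reduces to writing out the determinant sequences $\Delta_{i}^{\pm}$ for $l=3$ and matching them against the three groups of requirements in Theorem \ref{t4}.

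First I would compute the $\Delta_{i}^{\pm}$. For $l=3$ the matrices $K_{1}\pm K_{2}$ have size at most $2\times2$, so (using the convention $b_{j}=0$ for $j>l$) this is immediate: $\Delta_{0}^{\pm}=1$, $\Delta_{1}^{\pm}=1\pm p_{3}$, and with $K_{1}=\left[\begin{smallmatrix}1&p_{1}\\0&1\end{smallmatrix}\right]$ and $K_{2}=\left[\begin{smallmatrix}p_{2}&p_{3}\\p_{3}&0\end{smallmatrix}\right]$,
\[
\Delta_{2}^{+}=1+p_{2}-p_{3}(p_{1}+p_{3}),\qquad \Delta_{2}^{-}=1-p_{2}+p_{3}(p_{1}-p_{3}).
\]

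Then I would substitute these into Theorem \ref{t4}. Since $l=3$ is odd, the eigenvalue (first) criterion demands $\Delta_{2}^{-}(r^{ns})=0$, $\Delta_{2}^{+}(r^{ns})>0$, $P_{r^{ns}}(1)>0$ and $(-1)^{3}P_{r^{ns}}(-1)>0$, while its clause ``$\Delta_{i}^{\pm}>0$ for $i=l-3,l-5,\dots,2$'' is vacuous for $l=3$; using $P_{r^{ns}}(1)=1+p_{1}+p_{2}+p_{3}$ and $-P_{r^{ns}}(-1)=1-p_{1}+p_{2}-p_{3}$, these become precisely (i)--(iv). The transversality (second) criterion $\frac{d}{dr}\Delta_{2}^{-}(r)\big|_{r=r^{ns}}\neq0$ is (v). For the non-resonance (third) criterion I would evaluate $\Psi=1-\tfrac12 P_{r^{ns}}(1)\,\Delta_{l-3}^{-}(r^{ns})/\Delta_{l-2}^{+}(r^{ns})$, which with $\Delta_{0}^{-}=1$ and $\Delta_{1}^{+}=1+p_{3}$ equals $1-\frac{1+p_{1}+p_{2}+p_{3}}{2(1+p_{3})}$, so $\cos(2\pi/j)\neq\Psi$ for $j=3,4,5,\dots$ is exactly (vi). Since the choice of $r^{ns}$ makes $\Delta_{2}^{-}(r^{ns})=0$, all hypotheses of Theorem \ref{t4} are met and a Neimark-Sacker bifurcation occurs at $r=r^{ns}$.

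I do not anticipate a deep obstacle: no dynamical analysis beyond Theorem \ref{t4} is required and the computations are elementary. The two points needing care are (a) reading the index pattern in the definitions of $K_{1},K_{2}$ correctly at small $l$, and (b) keeping in mind that $p_{1},p_{2},p_{3}$ are functions of $r$ via $E_{3}$, so that the existence of a real root $r^{ns}$ and the non-vanishing in (v) are genuine restrictions on the model parameters rather than identities. The main difficulty is therefore expository --- performing the reduction cleanly without getting bogged down in the explicit expressions for $p_{1},p_{2},p_{3},\delta_{1},\delta_{2}$.
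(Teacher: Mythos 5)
Your proposal is correct and takes essentially the same route as the paper: both apply the explicit criterion of Theorem \ref{t4} with $l=3$ and $v=r$, identify $\Delta_{2}^{\mp}(r)=1\mp p_{2}\pm p_{3}(p_{1}\mp p_{3})$, $P_{r^{ns}}(\pm1)$, the transversality derivative, and the non-resonance quantity $1-\frac{1+p_{1}+p_{2}+p_{3}}{2(1+p_{3})}$ with conditions (i)--(vi). Your write-up is in fact slightly more explicit than the paper's, since you spell out the $K_{1},K_{2}$ matrices and note that the clause for $i=l-3,l-5,\dots$ is vacuous when $l=3$.
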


\begin{proof} Let us consider, r as a bifurcation parameter and  $l=3$. Now, following the theorem (\ref{t4}) and using the equation (\ref{eq5}), we compute the following values \\
$$\Delta_{2}^{-}(r)=1-p_2+ p_3(p_1-p_3) = 0,$$
$$\Delta_{2}^{+}(r)=1+p_2-p_3(p_1+p_3) > 0,$$
$$P_{r^{ns}}(1)=1+p_1+p_2+p_3 > 0,$$
$$(-1)^{3}P_{r^{ns}}(-1)=1-p_1+p_2-p_3 > 0,$$
$$\frac{d}{dr}(\Delta_{2}^{-}(r))_{r=r^{ns}}\neq 0,$$
$$1-\frac{(0.5P_{v_{0}}(1)\Delta_{0}^{-}(r))}{\Delta_{1}^{+}(r))}=1-\frac{1+p_1+p_2+p_3}{2(1+p_3)}$$.
\end{proof}

Other parameters can also be taken into consideration as the bifurcation parameter, leading to similar results.

\section{Period-doubling bifurcation} The period doubling bifurcation is a notable occurrence in discrete dynamical systems, wherein the system experiences a series of bifurcations that lead to the doubling of the period of its orbits. To conduct an analysis of the Period-doubling bifurcation for the map (\ref{eq4}) about the fixed point $E_3$, a specific criteria \cite{wen2} is required, as described in the following section.
\begin{theorem}\label{t6}
    Given a discrete dynamical system of l dimensions: $Z_{u+1}=f_{r}(Z_{u})$, where $r \in R$ denotes a bifurcation parameter. Assume that $ Z^{*}$ is a fixed point of $f_{r}$. Then, the
characteristic polynomical for Jacobian matrix $J(Z^{*}) = (a_{ij})_{l \times l}$ of l-dimensional
map $f_{r}$ is as follows:
$$P_{r}(\lambda)=\lambda^{l}+b_1\lambda^{l-1}+b_2\lambda^{l-2}+b_{l-1}\lambda+b_{l}$$
where, $b_{i}=b_{i}(r)$, i=1,2,3,...l . Let us consider, a sequence of determinants of the type $(\Delta_{i}^{\pm}(r))_{i=0}^{l}$ such that $\Delta_{0}^{\pm}(r)=1$, and $\Delta_{i}^{\pm}(r)=det(K_1 \pm K_2)$, where $K_1$ and $K_2$ are same as given in theorem (\ref{eq4}). Then, a period-doubling bifurcation occurs at a critical value $r = r^{pb}$ if and only if the following requirements are fulfilled.\\
(\romannum{1}) Eigenvalue requirement: $P_{r^{pb}}(-1) = 0$, $P_{r^{pb}}(1) > 0$ , $\Delta_{l-1}^{\pm}(r^{pb})>0$, $\Delta_{i}^{\pm}(r^{pb})>0$, i=l-2, l-4,...., 2(or 1), when n is even or odd, repectively.\\
(\romannum{2}) Transversality requirement: $\frac{\sum_{i=1}^{l} b_{i}^{'}(-1)^{l-i}}{\sum_{i=1}^{l} (l-i+1)(-1)^{l-i}c_{i-1}}\neq 0$; where $b_{i}^{'}$ represents the first derivative of $b_{i}$ with respect to $r$ at $r=r^{pb}$.
\end{theorem}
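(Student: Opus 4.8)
The plan is to treat this as a reconstruction of the explicit period-doubling criterion of \cite{wen2}, combining an algebraic characterisation of eigenvalue location with the analytic theory of flip bifurcations, in direct parallel with the way Theorem \ref{t4} underlies Theorem \ref{t5}. The first goal is to show that the listed conditions are equivalent to the following geometric picture at $r=r^{pb}$: the Jacobian $J(Z^{*})$ has $\lambda=-1$ as a \emph{simple} eigenvalue, all of its remaining $l-1$ eigenvalues lie strictly inside the unit circle, and the eigenvalue branch through $-1$ crosses the unit circle with nonzero speed as $r$ varies. Once that equivalence is in place, the bifurcation conclusion follows from the centre manifold theorem together with the standard one-dimensional flip bifurcation theorem, and the ``only if'' direction is forced by the fact that any period-doubling requires some multiplier to pass through $-1$.

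First I would handle the eigenvalue requirement. The identity $P_{r^{pb}}(-1)=0$ says precisely that $-1$ is a root of the characteristic polynomial, hence an eigenvalue of $J(Z^{*})$. For the rest of the spectrum I would invoke the Schur--Cohn / Jury machinery: the matrices $K_1\pm K_2$ formed from the coefficients $b_i$ are, up to the standard arrangement, the Bezoutian (resultant) matrices attached to the pair $P_{r}(\lambda)$ and $\lambda^{l}P_{r}(1/\lambda)$, so the inner determinants $\Delta_i^{\pm}(r)$ are exactly the nested minors of the Jury table. The classical modified Jury theorem then gives that $P_{r^{pb}}(1)>0$, $\Delta_{l-1}^{\pm}(r^{pb})>0$ and the stated positivity of the intermediate $\Delta_i^{\pm}$ hold if and only if every root of $P_{r^{pb}}$ other than the prescribed root at $-1$ lies in the open unit disc and that root is simple. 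I expect this to be the main obstacle: one must verify that the $K_1\pm K_2$ displayed here really reproduce the correct resultant matrices, and track carefully how the unit-modulus root at $-1$ converts the usual strict-Schur inequalities into the ``one root released to $-1$'' inequalities. I would lean on the resultant identities in \cite{wen2} and its references rather than rederive them.

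Next, the transversality requirement. Because $-1$ is a simple root, the implicit function theorem furnishes a smooth branch $\lambda(r)$ with $\lambda(r^{pb})=-1$; differentiating $P_{r}(\lambda(r))\equiv 0$ gives $\lambda'(r^{pb})=-A/B$, where $A=\partial_{r}P_{r}(\lambda)$ and $B=\partial_{\lambda}P_{r}(\lambda)$ are evaluated at $\lambda=-1,\ r=r^{pb}$. Writing $c_0=1,\ c_i=b_i$ for the coefficients, one computes $A=\sum_{i=1}^{l}b_i'(-1)^{l-i}$ and $B=\sum_{i=1}^{l}(l-i+1)(-1)^{l-i}c_{i-1}$, the latter nonzero by simplicity of the root; hence the transversality requirement is exactly $\lambda'(r^{pb})\neq 0$, i.e. the multiplier leaves the unit circle with nonzero speed.

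Finally, to close the argument: with a simple multiplier $-1$, all other multipliers strictly inside the unit circle, and transversal crossing, the centre manifold reduction produces a one-parameter family of one-dimensional maps possessing a fixed point whose multiplier passes through $-1$ transversally, and the flip bifurcation theorem yields a period-doubling bifurcation at $r=r^{pb}$ (with the sub/supercritical character decided by the cubic normal-form coefficient, which one could record separately). Conversely, if a period-doubling occurs at $r^{pb}$ then by definition a multiplier crosses $-1$, which forces $P_{r^{pb}}(-1)=0$, hyperbolicity of the complementary spectrum (the $\Delta_i^{\pm}$ inequalities), and nonzero crossing speed (the transversality condition); this gives the ``only if'' direction and completes the equivalence. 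The whole scheme is the coordinate-free, $l$-dimensional version of the $l=3$ computation to be carried out for system (\ref{eq4}) about $E_3$ in the sequel.
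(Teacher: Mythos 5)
The paper offers no proof of this theorem at all: it is quoted as an external criterion from \cite{wen2}, exactly as Theorem \ref{t4} is quoted from \cite{wen}, so there is no in-paper argument to compare yours against. Your reconstruction is nonetheless the right skeleton and matches how the result is actually established in the source: the determinants $\Delta_{i}^{\pm}$ are the Schur--Cohn/Jury minors, the eigenvalue requirement pins down a simple root at $-1$ with the remaining spectrum in the open unit disc, the transversality quotient is $\lambda'(r^{pb})=-A/B$ obtained by implicit differentiation of $P_{r}(\lambda(r))\equiv 0$ (and your reading of the undefined $c_{i}$ as $c_{0}=1$, $c_{i}=b_{i}$ is the intended one), and the conclusion follows from centre-manifold reduction plus the one-dimensional flip theorem.

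Two caveats. First, the hardest step --- verifying that $\det(K_1\pm K_2)$ really are the relevant Schur--Cohn determinants and that the stated sign pattern is equivalent to ``one simple root at $-1$, all others strictly inside the unit circle'' --- is precisely the content of \cite{wen2}, and you defer it to that reference; that is defensible here, since the paper itself does no more, but it means your write-up is a proof outline rather than a proof. Second, your justification of the ``only if'' direction has a genuine gap: a period-doubling bifurcation in the general sense (birth of a period-two orbit on a centre manifold) forces a simple multiplier at $-1$ and transversal crossing, but it does not force the complementary spectrum into the unit disc --- hyperbolicity of the other multipliers, which is all your converse argument actually delivers, is strictly weaker than the Schur stability encoded by the inequalities $\Delta_{i}^{\pm}>0$. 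The stated equivalence therefore only holds under the convention, implicit in \cite{wen2}, that the period-doubling is of the stability-exchange type; you should either adopt that convention explicitly or weaken the converse accordingly.
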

By employing the aforementioned theorem, we determine the conditions that lead to the occurrence of period-doubling bifurcation in relation to the parameter $r$.
\begin{theorem}\label{t7} \cite{ali}
    The  fixed point $E_3$ of the map (\ref{eq4}) exhibits a period-doubling bifurcation at $r = r^{pb}$ when the subsequent specified conditions are satisfied.\\
(\romannum{1}) $1-p_2+ p_3(p_1-p_3) > 0,$\\
(\romannum{2}) $1+p_2-p_3(p_1+p_3) > 0,$\\
(\romannum{3}) $1\pm p_2 > 0,$\\
(\romannum{4}) $1+p_1+p_2+p_3 > 0,$ and \\
(\romannum{5}) $-1+p_1-p_2+p_3 = 0$,\\
where, the values of $p_1$, $p_2$, and $p_3$ are provided in equation (\ref{eq5}).
\end{theorem}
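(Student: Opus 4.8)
The plan is to derive Theorem~\ref{t7} as a direct specialization of the general period‑doubling criterion of Theorem~\ref{t6} to the three‑dimensional map~(\ref{eq4}) linearized at $E_3$. Concretely, I would take $l=3$ and use the characteristic polynomial~(\ref{eq5}), so that $b_1=p_1$, $b_2=p_2$, $b_3=p_3$, each viewed as a function of $r$ through the dependence of the coordinates $x^{*},y^{*},z^{*}$ and of the Jacobian $J_{e3}$ on $r$; the candidate critical value $r^{pb}$ is then a real root of $P_{r}(-1)=0$ lying in the parameter range where $E_3$ exists.

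The first step is to form the Jury‑type determinant sequence $\Delta_{i}^{\pm}(r)$, $i=0,1,2$, from the matrices $K_1,K_2$ of Theorem~\ref{t4}. For $l=3$ this is a short computation: $\Delta_{0}^{\pm}=1$, and, substituting $b_i=p_i$, the $1\times1$ and $2\times2$ blocks collapse to the closed forms appearing in items (i)--(iii), in particular $\Delta_{2}^{-}(r)=1-p_2+p_3(p_1-p_3)$ and $\Delta_{2}^{+}(r)=1+p_2-p_3(p_1+p_3)$. The eigenvalue requirement of Theorem~\ref{t6}(i) then translates line by line: $P_{r^{pb}}(-1)=0$ is exactly $-1+p_1-p_2+p_3=0$, i.e.\ item (v), the defining equation for $r^{pb}$; $P_{r^{pb}}(1)>0$ is $1+p_1+p_2+p_3>0$, item (iv); and the positivity of $\Delta_{2}^{\pm}(r^{pb})$ together with the lower‑order determinant gives items (i), (ii) and (iii). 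Assembling these reproduces the list (i)--(v).

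The second step is the transversality requirement of Theorem~\ref{t6}(ii), which for $l=3$ reduces to the nonvanishing of a ratio built from the derivatives $b_i'(r^{pb})=\frac{d}{dr}p_i$; this is the condition that $\lambda=-1$ crosses the unit circle with nonzero speed as $r$ passes through $r^{pb}$. Following \cite{ali}, I would record that this nondegeneracy is retained alongside (i)--(v), is generically satisfied, and can be checked numerically at the $r^{pb}$ produced for the parameter set of Table~(\ref{t1}). To confirm the event is a genuine codimension‑one period doubling rather than a more degenerate scenario, I would also note that once $P_{r^{pb}}(-1)=0$ holds one can factor $(\lambda+1)$ out of $P_{r^{pb}}(\lambda)$; the residual quadratic has both roots strictly inside the unit disk precisely under the positivity conditions (i)--(iv) (equivalently, the Schur--Cohn conditions for that quadratic), so $-1$ is the only eigenvalue on the unit circle at $r=r^{pb}$.

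The part I expect to be most delicate is bookkeeping rather than conceptual: assembling $K_1\pm K_2$ correctly for $l=3$, verifying that the determinants collapse to the compact forms in (i)--(iii), and checking that Theorem~\ref{t6}'s biconditional eigenvalue list is logically \emph{equivalent} to (i)--(v) and not merely implied by it. A secondary subtlety is that $p_1,p_2,p_3$ are themselves rational functions of $r$ through the $E_3$ coordinates, so the statement that $r^{pb}$ solves $-1+p_1-p_2+p_3=0$ must be interpreted as a polynomial equation in $r$ alone after substitution, and one should argue that such a real root exists inside the region where the existence conditions for $E_3$ are met before imposing the remaining strict inequalities.
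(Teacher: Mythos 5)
Your proposal is correct and is essentially the route the paper takes: the paper in fact prints no proof of Theorem~\ref{t7} at all (it only cites \cite{ali}), but its proof of the Neimark--Sacker analogue, Theorem~\ref{t5}, is exactly this specialization of the general criterion to $l=3$ with $b_i=p_i$ from equation~(\ref{eq5}), and you additionally (and rightly) retain the transversality requirement of Theorem~\ref{t6}, which the stated list (i)--(v) omits. One caveat on the bookkeeping you flagged as delicate: the honest computation of the lower-order determinant gives $\Delta_{1}^{\pm}=1\pm p_3$, not the $1\pm p_2$ printed in item (iii), so your claim that the determinants ``collapse to the closed forms appearing in items (i)--(iii)'' glosses over what appears to be a typo in the theorem statement rather than reproducing it.
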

\begin{figure}[H]
     \centering
     \begin{subfigure}[F]{0.4\textwidth}
         \centering
         \includegraphics[width=1.2\textwidth]{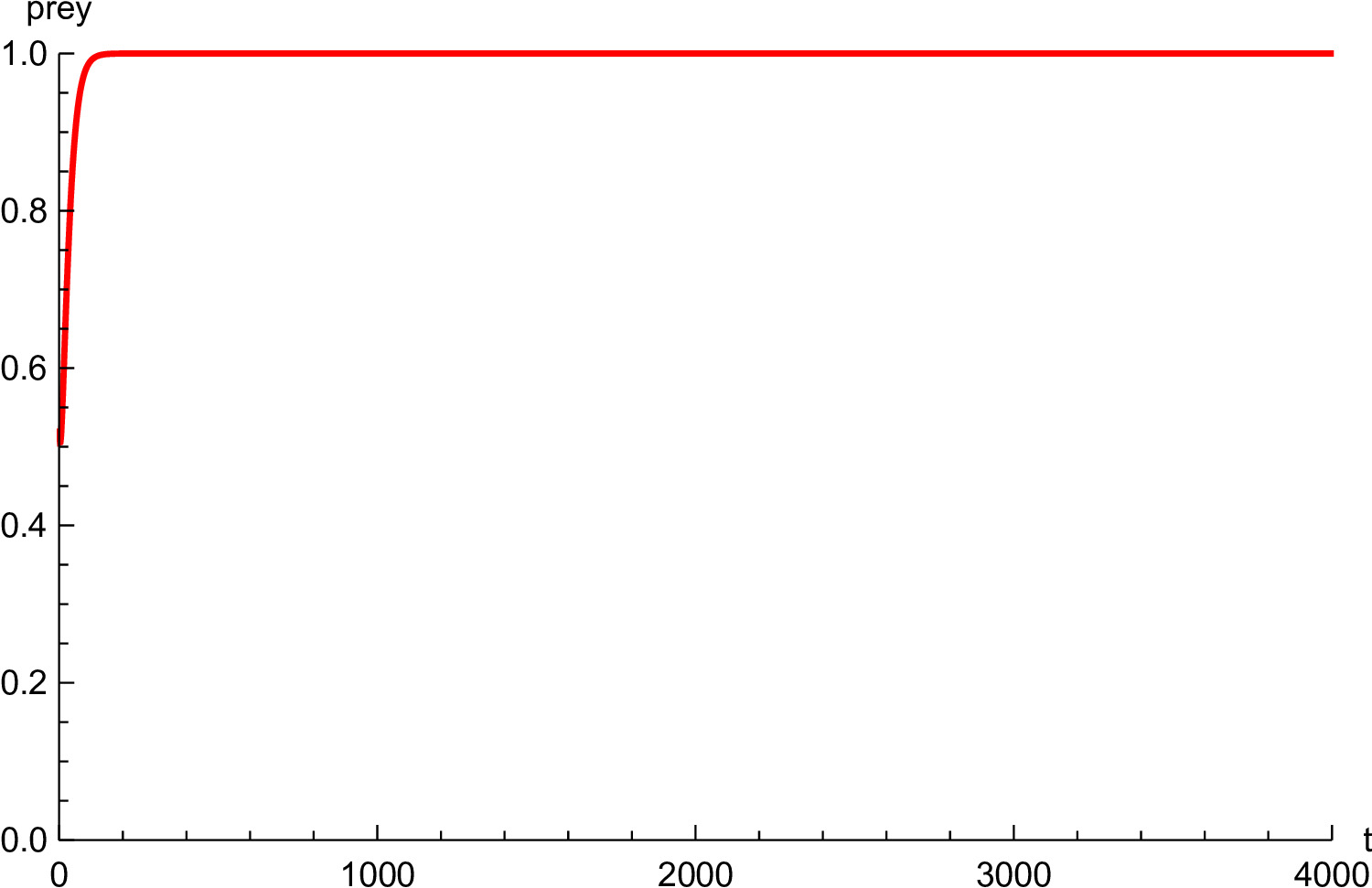}
         \caption{\emph{Time series of prey }}
         \label{fig:y equals x}
     \end{subfigure}
     \hfill
     \begin{subfigure}[F]{0.5\textwidth}
         \centering
         \includegraphics[width=\textwidth]{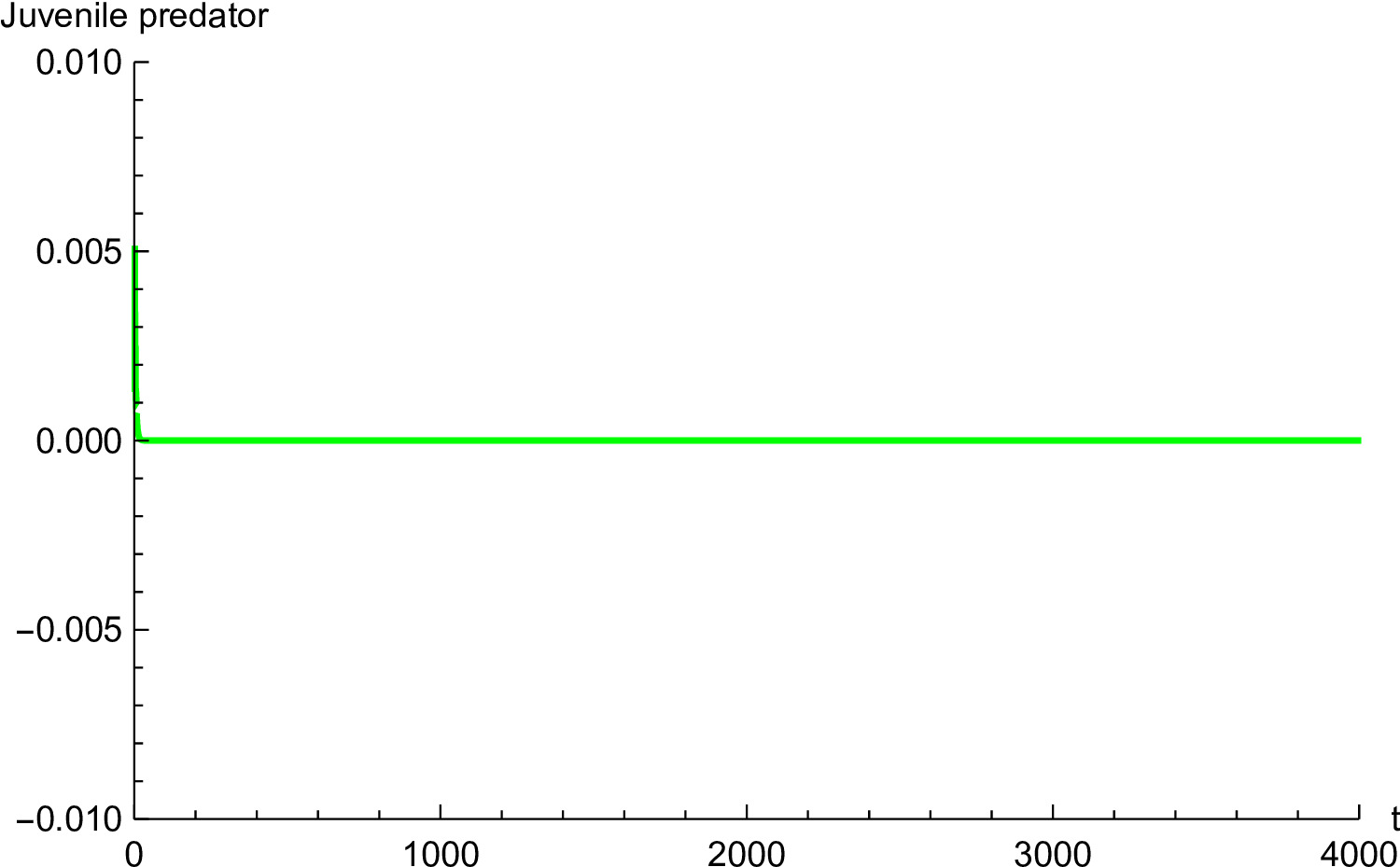}
         \caption{\emph{Time series of juvenile predator}}
     \end{subfigure}

     \begin{subfigure}[F]{0.5\textwidth}
         \centering
         \includegraphics[width=\textwidth]{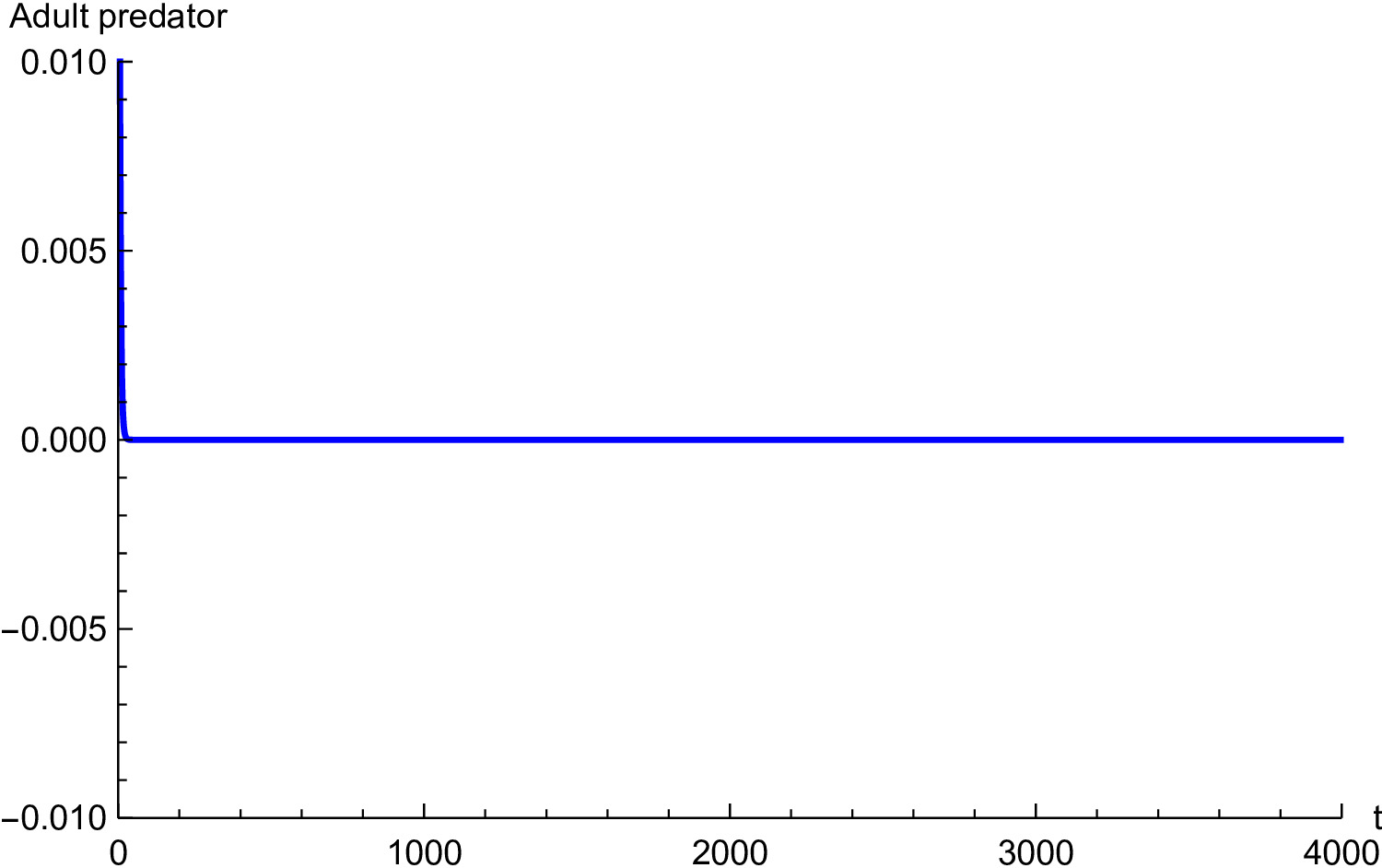}
         \caption{\emph{Time series of adult predator}}
     \end{subfigure}
        \caption{\emph{The stability of the axial fixed point $E_2$ is portrayed using parameter values $h=0.1$, $r=0.5$, $n=0.01$, $\alpha_1=10$, $\alpha_2=10$, $\mu=0.275944$, $\alpha_3=0.03$, $\beta=5.3$, $\gamma=9.5$, $k=100$, and $m=3.5$}}

        \label{fig1}
\end{figure}

\section{Numerical simulation}
In this section, we present numerical simulations to validate the theoretical findings previously discussed in the preceding sections. The hypothetical parameter values depicted in table (\ref{t1}) are taken into consideration. The Mathematica software is employed for conducting numerical simulations to facilitate the analysis of the obtained results.

At first we consider the parameter values $h=0.1$, $r=0.5$, $n=0.01$, $\alpha_1=10$, $\alpha_2=10$, $\mu=0.275944$, $\alpha_3=0.03$, $\beta=5.3$, $\gamma=9.5$, $k=100$, and $m=3.5$. In order to validate the stability requirements of the axial equilibrium point $E_2$ as given in theorem (\ref{th2}), these parameter values are employed. By utilising the given parameter values, the eigenvalues of the Jacobian matrix $J_{e2}$ can be determined. These eigenvalues are $|\lambda_4|=0.9505<1$, $|\lambda_5|=0.678371<1$, and $|\lambda_5|=0.845371<1$. As a result, as illustrated in the figure (\ref{fig1}), the fixed point $E_2$ is stable. We now take the parameter values listed in the following table (\ref{t1}) to validate the stability criteria of the coexisting equilibrium $E_3$ as mentioned in the theorem (\ref{th3}).
\begin{table}[H]
\begin{center}
$\begin{array}{|c|c|}
\hline
 \text{parameter} & \text{values} \\
\hline
 h & 0.1 \\
\hline
 k & 1 \\
\hline
 m & 3.5 \\
\hline
 n & 0.01 \\
\hline
 r & 0.75 \\
\hline
 \beta  & 5.3 \\
\hline
 \gamma  & 9.5 \\
\hline
 \mu  & 2.375 \\
\hline
 \alpha_1 & 10. \\
\hline
 \alpha_2 & 10. \\
\hline
 \alpha_3 & 0.03 \\
\hline
\end{array}$
\end{center}
\caption{Parameter values of the system (\ref{eq4}) for the purpose of numerical simulation}
\label{t1}
\end{table}

Using these parameter values, we compute the characteristic equation of the Jacobian matrix $J_{e_3}$ which is given by
\begin{equation}\label{eq6}
\lambda^3-1.15208\lambda^2-0.64142\lambda+0.823035=0
\end{equation}
    Comparing equation (\ref{eq6}) with equation (\ref{eq5}), we have $p_1=-1.15208$, $p_2=-0.64142$ and $p_3=0.823035$. Npw, we have $1+p_2-|p_1+p_3|=0.0295379>0$, $1-p_3^2-p_2+p_1 p_3=0.0158339>0$, and $ 3-p_2-|p_1-3p_3|=0.0202382>0$. Hence, it can be concluded that the fixed point $E_3$ exhibits stability in accordance with theorem (\ref{th3}). It is readily apparent in figures  (\ref{1b}) and (\ref{fig2}). The stability region of the coexisting fixed point in the $nr\mu$-space is depicted in Figure (\ref{1b}), taking into account the parameter values provided in table (\ref{t1}), with the exception of the parameters n, r, and $\mu$.
\begin{figure}[H]
     \centering
     \begin{subfigure}[F]{0.4\textwidth}
         \centering
         \includegraphics[width=\textwidth]{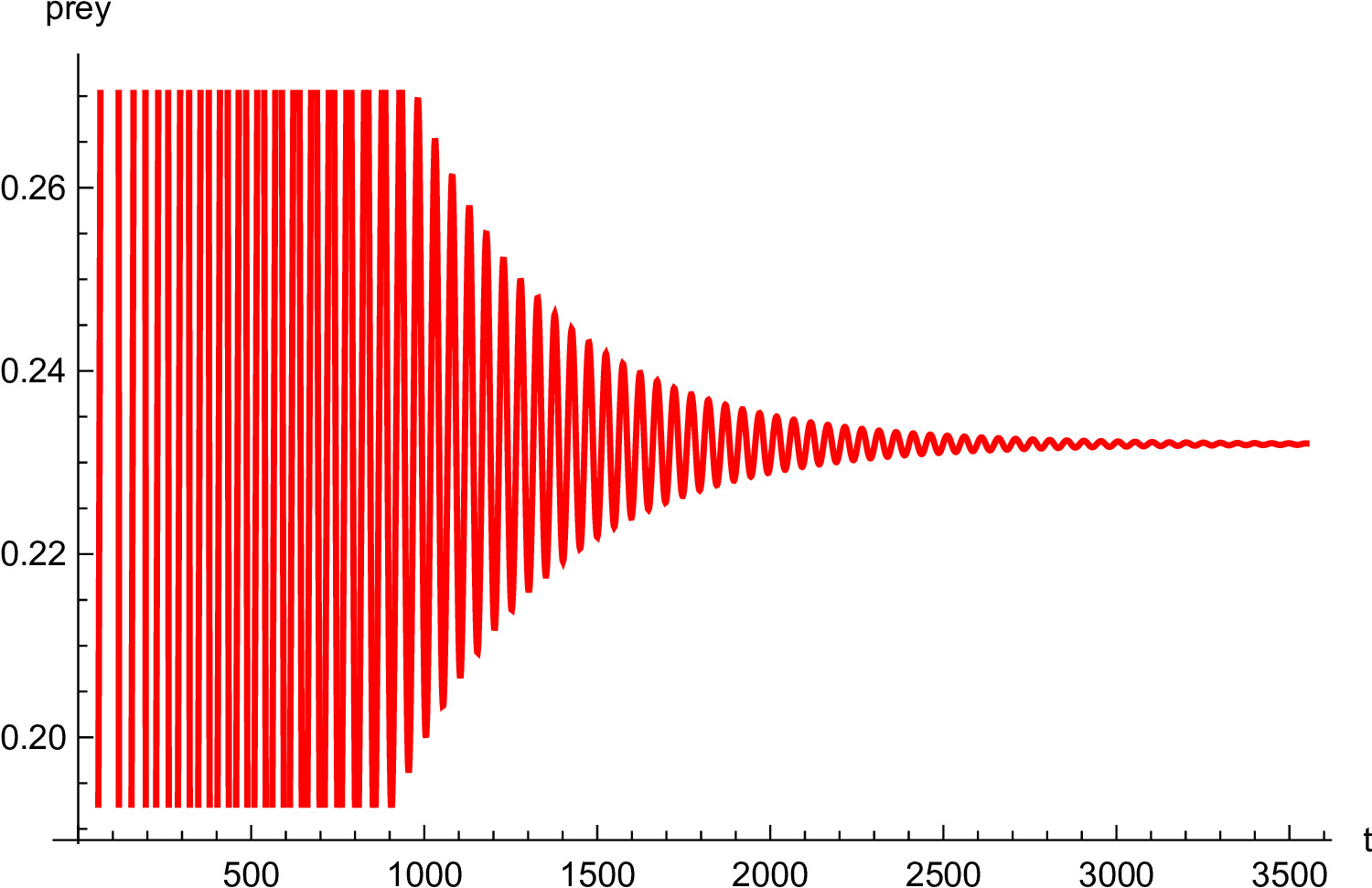}
         \caption{\emph{Time series of Prey }}
         \label{fig:y equals x}
     \end{subfigure}
     \hfill
     \begin{subfigure}[F]{0.4\textwidth}
         \centering
         \includegraphics[width=\textwidth]{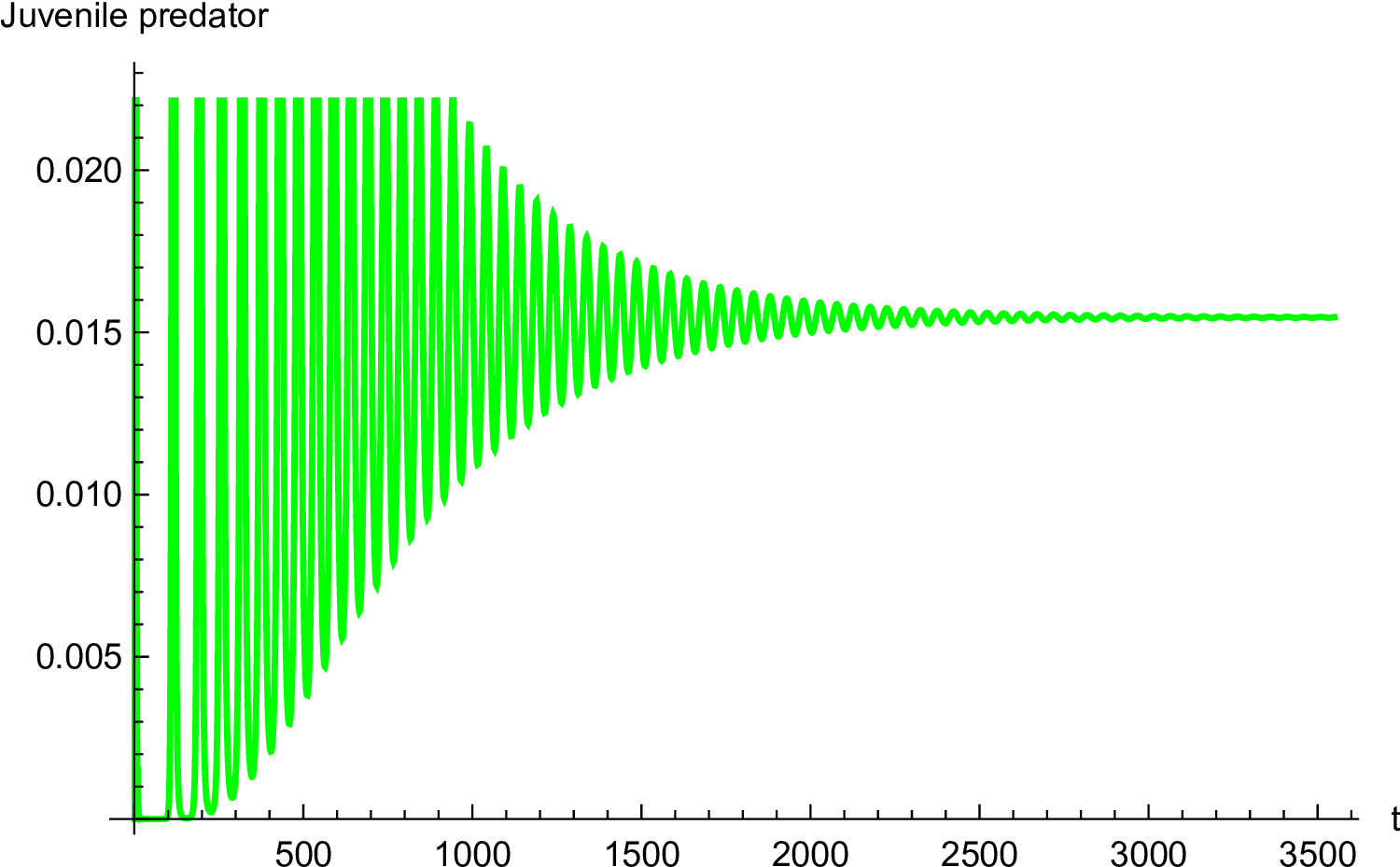}
         \caption{\emph{Time series of juvenile predator}}
     \end{subfigure}

     \begin{subfigure}[F]{0.4\textwidth}
         \centering
         \includegraphics[width=\textwidth]{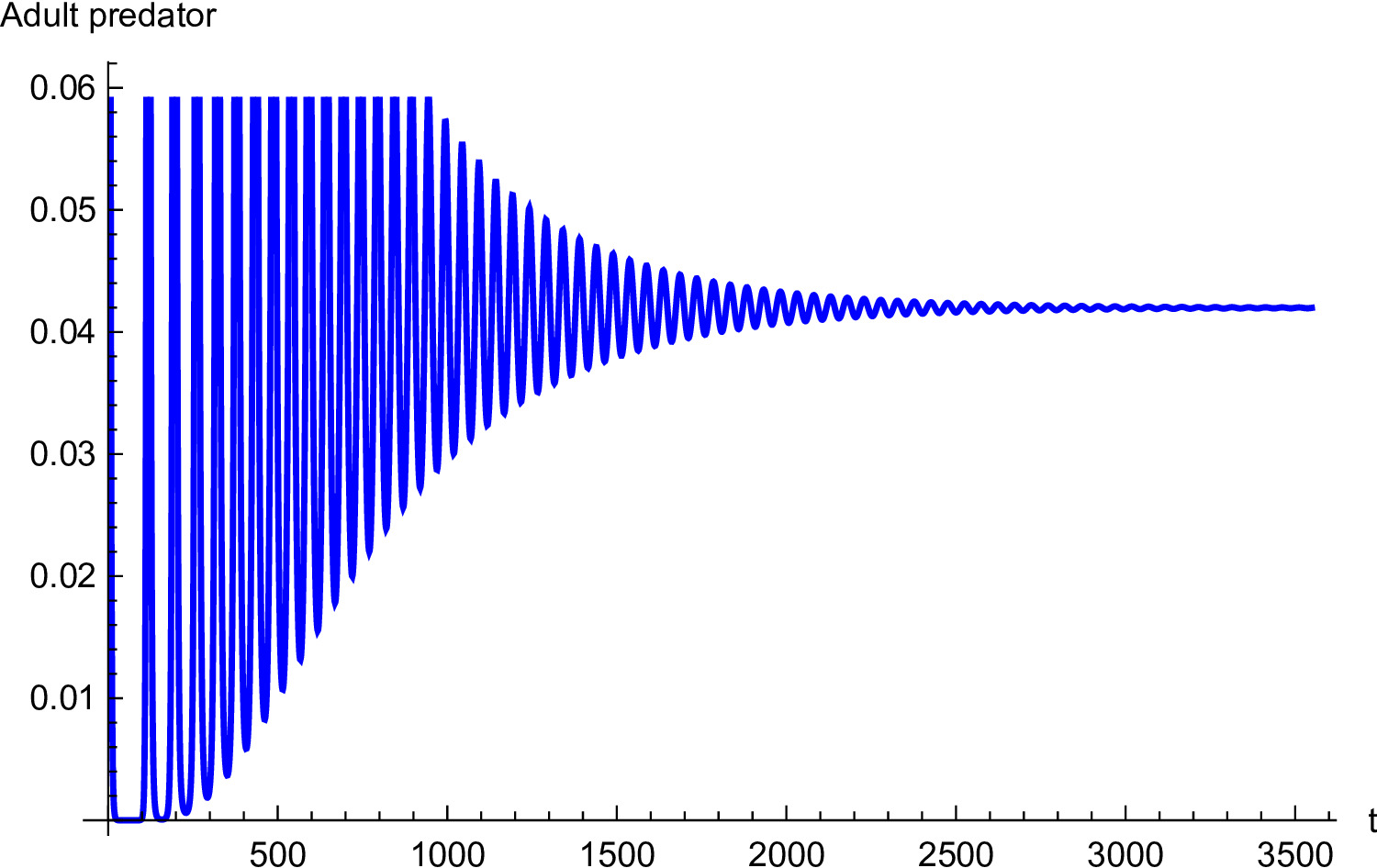}
         \caption{\emph{Time series of adult predator}}
     \end{subfigure}
        \caption{\emph{The stability of the coexisting fixed point $E_3$ is demonstrated using parameter values from table (\ref{t1})}}

        \label{fig2}
\end{figure}

In order to validate the outcome presented in theorem (\ref{t5}), we examine the parameter values $r\in (0.2,0.7)$ , $\mu=3.07227$, while keeping the remaining parameters consistent with those specified in table (\ref{t1}). Parameter $r$ is used as the bifurcation parameter in this case. In the vicinity of the parameter value $r=0.539=r^{ns}$, the fixed point (0.179337, 0.0118774, 0.0322387) undergoes a transition in stability, transitioning from a stable fixed population to a stable periodic population as a result of a Neimark-Sacker bifurcation. For a given value of $r=0.539$, $\mu=3.07227$, and assuming all other parameters are as specified in table (\ref{t1}), we find the characteristic equation of the Jacobian matrix at the fixed point $E_3$
\begin{equation}\label{eq7}
\lambda^3-1.15989\lambda^2-0.645119\lambda+0.827696=0
\end{equation}

here, $p_1=-1.15989$, $p_2=-0.645119$, and $p_3=0.827696$.\\

Now, we find
$1-p_2+ p_3(p_1-p_3) = 0$ , $1+p_2-p_3(p_1+p_3)=0.629838 > 0$, $1+p_1+p_2+p_3=0.0226851 > 0$, ,$1-p_1+p_2-p_3=0.687076 > 0$, $\frac{d}{dr}(1-p_2+ p_3(p_1-p_3))_{r=r^{ns}}=0.0000654356 \neq 0$, and using the equation $cos(\frac{2\pi}{j})=0.993794$, one obtains $j=\pm 56.3685$, therefore, the non-resonance criterion is also satisfied
   i.e, all the necessary conditions for the occurrence of the Neimark-Sacker bifurcation have been satisfied, as stated in theorem (\ref{t5}). The visual representations for the same can be observed in the diagrams depicted in Figure (\ref{fig3}). Furthermore, by selecting a value of $r$ that is less than $r^{ns}$, specifically $r=0.48$, and $\mu=3.07227$, while keeping all other parameters as specified in table (\ref{t1}), it is found that all the eigenvalues of the Jacobian matrix $J_{e3}$ are $\lambda_7=0.994474 + 0.104977 i$, $\lambda_8=0.994474 - 0.104977 i$, and $\lambda_9=-0.828008$ i.e., $|\lambda_7|<1$, $|\lambda_8|<1$, and $|\lambda_9|<1$. This confirms the stability of the fixed point $E_3$. Although, with a value of $r=0.6>r^{ns}$ and all other parameter values remaining the same as previously stated, the eigenvalues of $J_{e3}$ are found to be $\lambda_7=0.998289 + 0.105108 i$, $\lambda_8=0.998289 - 0.105108 i$, and $\lambda_9=-0.828019$ i.e., $|\lambda_7|>1$, and $|\lambda_8|>1$. This observation confirms the unstable nature of the fixed point.

\begin{figure}[H]
     \centering
     \begin{subfigure}[F]{0.4\textwidth}
         \centering
         \includegraphics[width=1.2\textwidth]{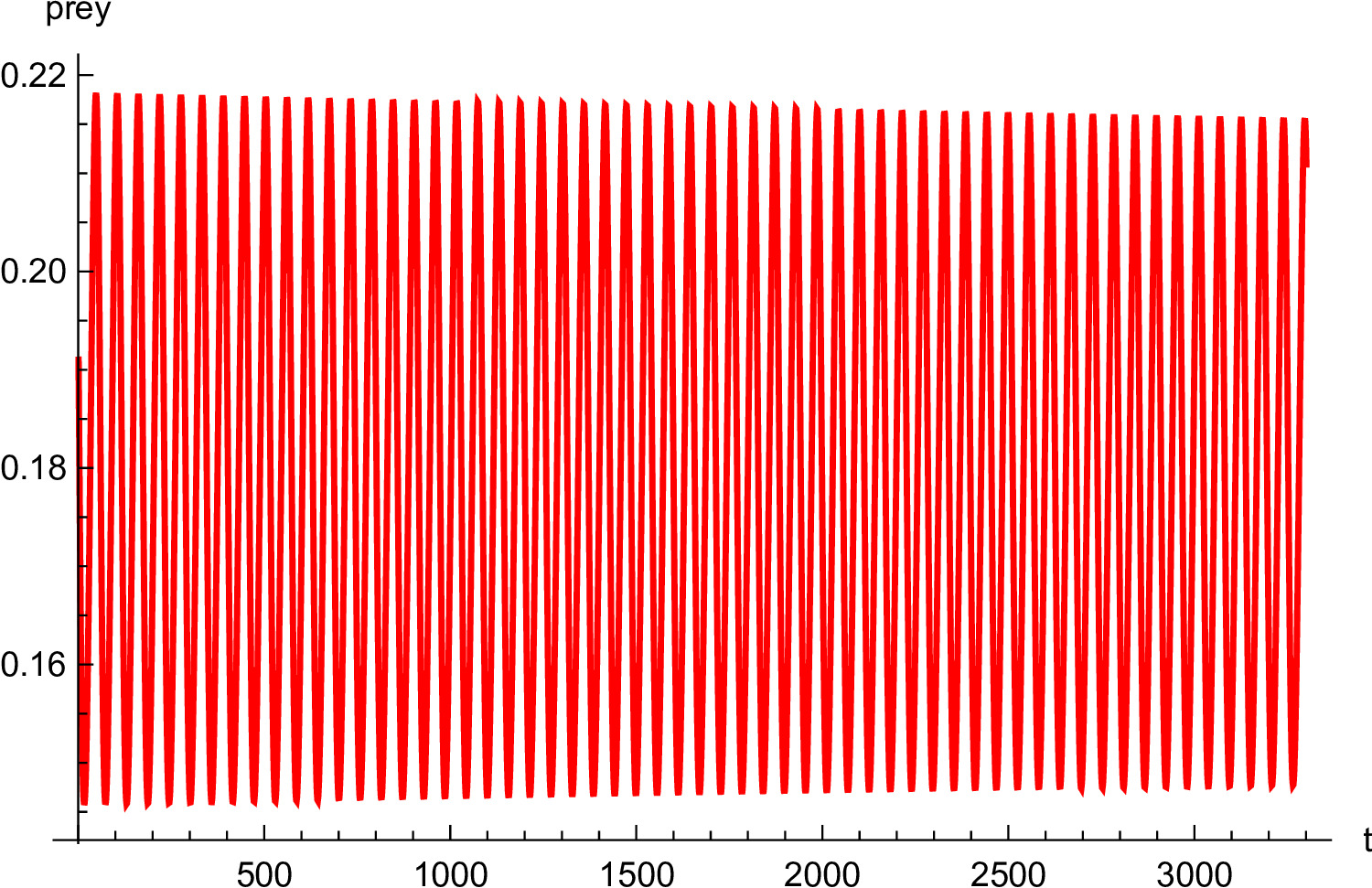}
         \caption{\emph{Time series of prey }}
         \label{fig:y equals x}
     \end{subfigure}
     \hfill
     \begin{subfigure}[F]{0.5\textwidth}
         \centering
         \includegraphics[width=\textwidth]{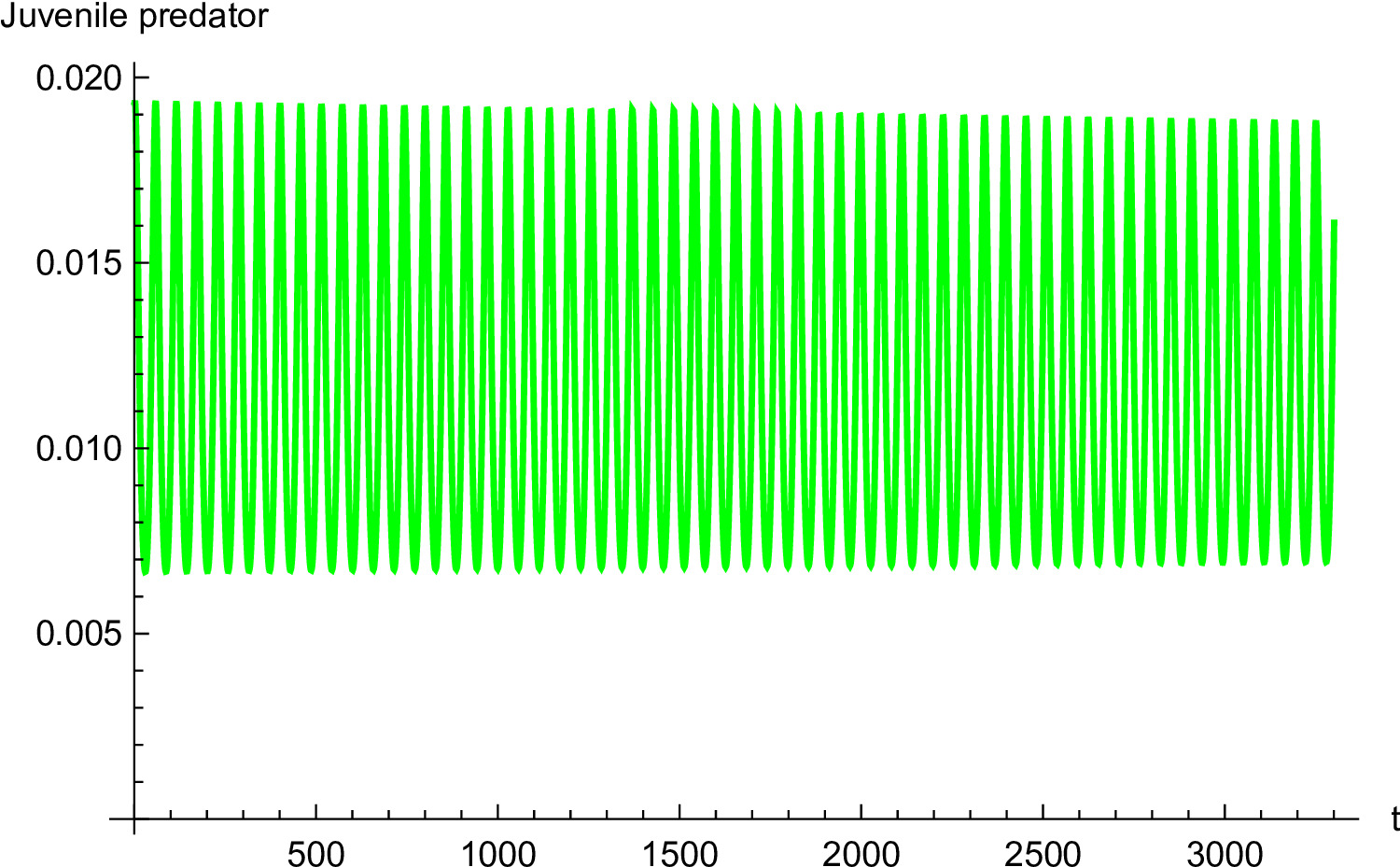}
         \caption{\emph{Time series of juvenile predator}}
     \end{subfigure}

     \begin{subfigure}[F]{0.45\textwidth}
         \centering
         \includegraphics[width=\textwidth]{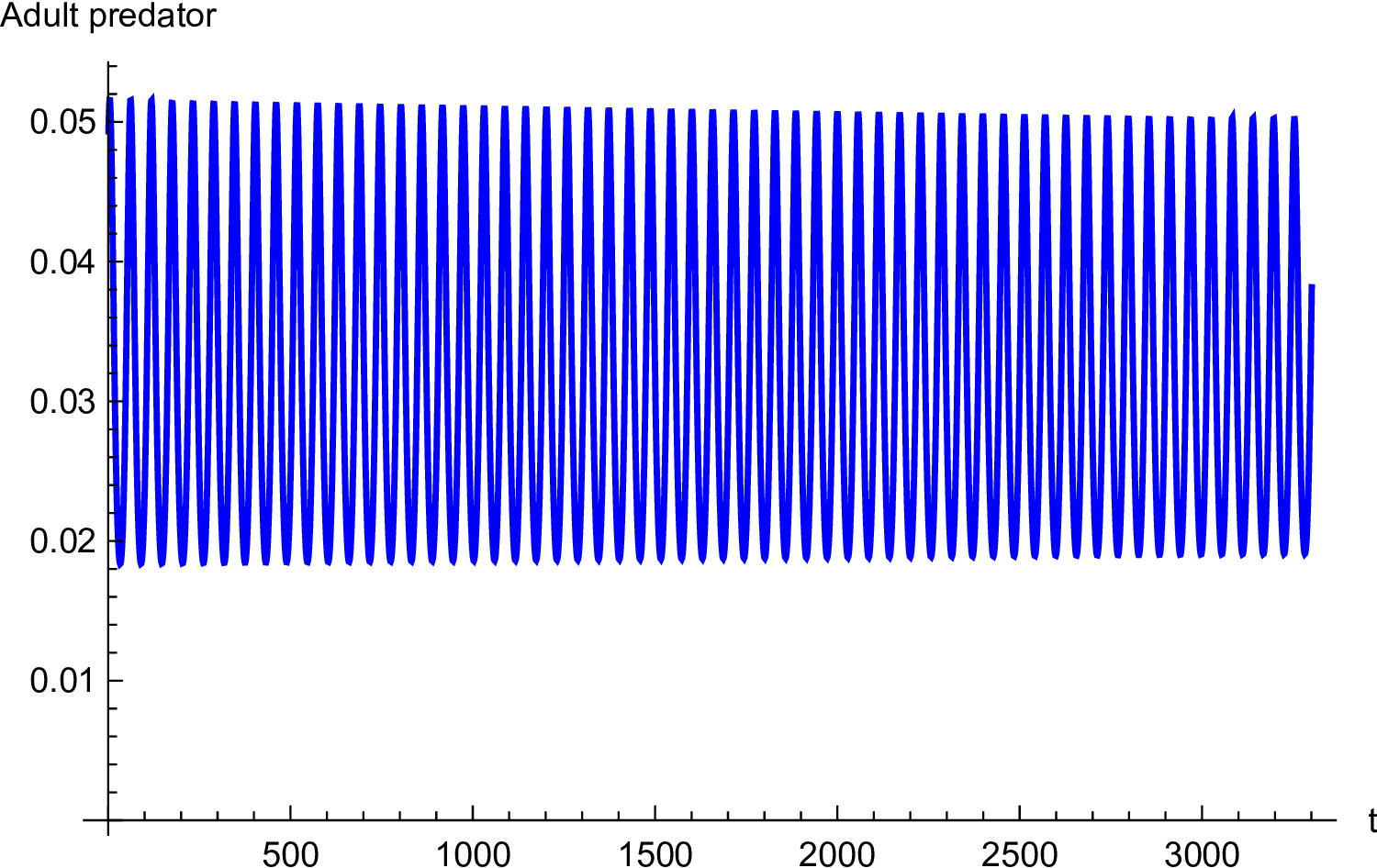}
         \caption{\emph{Time series of adult predator}}
     \end{subfigure} \hfill
      \begin{subfigure}[F]{0.45\textwidth}
         \centering
         \includegraphics[width=\textwidth]{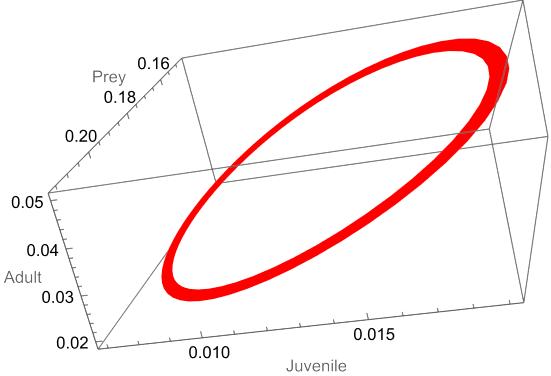}
         \caption{\emph{Phase portrait}}
     \end{subfigure}
        \caption{\emph{The occurrence of the Neimark-Sacker bifurcation is demonstrated when the parameter $r$ is varied. These figures have been generated utilising the parameter values $r=0.539$ and $\mu=3.07227$, while the remaining parameter values are sourced from table (\ref{t1}).}}

        \label{fig3}
        \end{figure}

 To examine the conditions for the occurrence of period-doubling bifurcation, as laid out in theorem (\ref{t7}), we consider the parameter values $r\in (22,25)$ , $\mu=0.59977$, and the remaining parameters are maintained in accordance with the values provided in the table (\ref{t1}). Here , r is taken as the bifurcation parameter. The stability of the coexisting fixed point (0.920016, 0.0509269, 0.13823) undergoes a transition from stability to instability at the value $r=23.7137=r^{pd}$, resulting from a period-doubling bifurcation. At the period-doubling bifurcation point $r=r^{pd}$, the fixed point $E_3$ undergoes destabilisation, resulting in the emergence of two points that constitute the period-2 solution. The characteristic polynomial of $J_{e3}$ with $r=r^{pd}$, $\mu=0.59977$ and the other parameters as stated previously, 
\begin{equation}\label{eq8}
\lambda^3+0.992639\lambda^2-0.951366\lambda-0.944005=0
\end{equation}
here, $p_1=0.992639$, $p_2=-0.951366$, and $p_3=-0.944005$.\\

Next, we proceed with the computation of the expression $1-p_2+ p_3(p_1-p_3)=0.123164> 0$, $1+p_2-p_3(p_1+p_3)=0.094544 > 0$, $1+ p_2=0.0486336 > 0$, $1- p_2=1.95137 > 0$, $1+p_1+p_2+p_3=0.0972673 > 0$, and 
 $-1+p_1-p_2+p_3 = 0$ which implies the fact that, as stated in theorem (\ref{t7}), all the requirements for a period-doubling bifurcation are met in the vicinity of the coexisting fixed point $(0.920016, 0.0509269, 0.13823)$ at the critical value of the bifurcation parameter $r = r^{pd}$. The figure (\ref{fig4}) illustrates the period-doubling bifurcation diagram associated with the parameter $r$. Furthermore, when $r = 22 < r^{pd}$ and all other parameter values remain unchanged as previously discussed, the eigenvalues of the matrix $J_{e3}$ are found to be $0.975264$, $-0.90591 + 0.124984i$, and $-0.90591 - 0.124984i$. These eigenvalues have modulus less than 1, indicating that the fixed point $E_3$ is stable. However, considering the value of $r$ is 25, which is greater than $r^{pd}$, and assuming that the remaining parameter values are the same as those discussed earlier, the eigenvalues of the Jacobian matrix $J_{e3}$ are determined to be $|-1.18461|>1$, $|0.975302|<1$, and $|-0.900491|<1$, thus confirming the unstable nature of the coexisting fixed point $E_3$.

        \begin{figure}[H]
            \centering
            \includegraphics[width=0.8\textwidth]{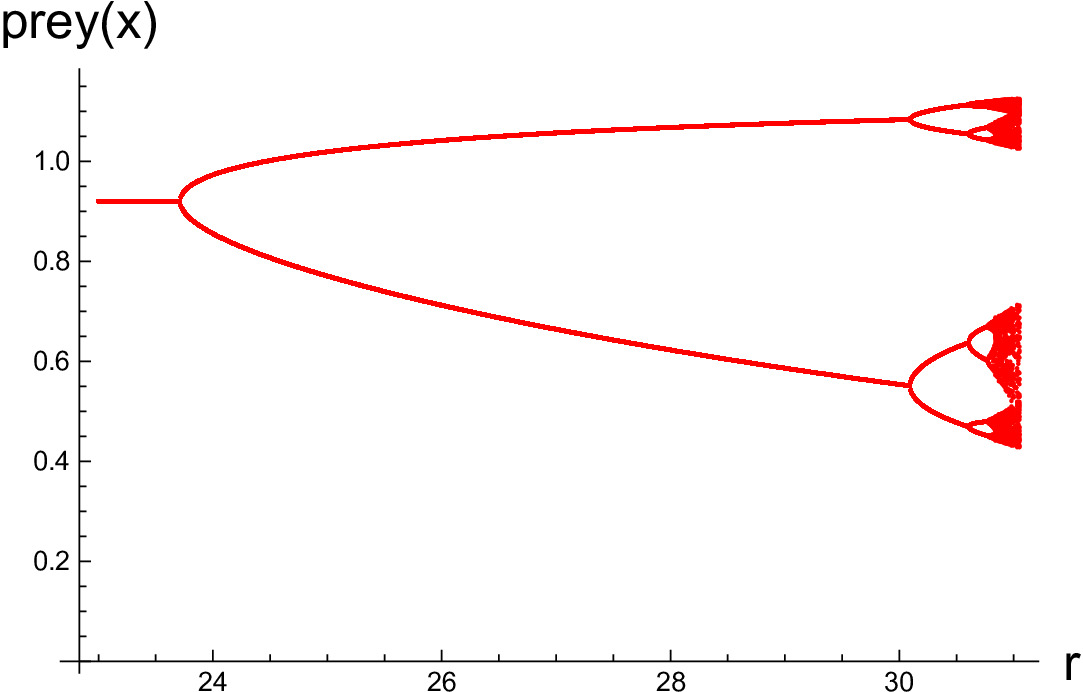}
            \caption{Period-doubling bifurcation is depicted in relation to the bifurcation parameter $r$. The figure is constructed using the parameters $r=23.7137$ and $\mu=0.59977$, with the remaining parameters are kept consistent with the values specified in table (\ref{t1}). }
            \label{fig4}
        \end{figure}

Now, considering all the parameter values specified in table (\ref{t1}), with the exception of $\mu=3.3125$, and by varying the prey refuge parameter $n$, it becomes apparent that the stability of the coexisting fixed point experiences a change near the value $n=0.0555353=n^{ns}$. More precisely, stable periodic cycles arise as a result of the manifestation of a Neimark-Sacker bifurcation at the critical value $n=n^{ns}$. Numerical verification of this claim can be carried out by applying the theorem (\ref{t5}). After some calculations, we get $p_1=-0.969343$, $p_2=-0.985856$, and $p_3=0.983486$. Subsequently, the following calculations are obtained: $1-p_2+ p_3(p_1-p_3) = 0$, $1+p_2-p_3(p_1+p_3)=0.63923 > 0$, $1+p_1+p_2+p_3=0.0403936 > 0$, $1-p_1+p_2-p_3=0.696705 > 0$,  $\frac{d}{dn}(1-p_2+ p_3(p_1-p_3))_{n=n^{ns}}=0.114834 \neq 0$, and by utilising the equation $cos(\frac{2\pi}{j})=0.988932$, it is found that $j=\pm 42.1926$, consequently, it can be concluded that the non-resonance criterion is also met. Therefore, based on theorem (\ref{t5}), it can be concluded that all the necessary conditions for the occurrence of a Neimark-Sacker bifurcation have been satisfied. In addition, the figure (\ref{fig5}) confirms the same.

The parameter related to the prey refuge $n$ plays a crucial role in maintaining the coexistence of all species within the system being examined. Considering the specified parameter values $h=0.1$, $r=0.75$, $\alpha_1=10$, $\alpha_2=10$, $\mu=2.375$, $\alpha_3=0.03$, $\beta=5.3$, $\gamma=9.5$, $k=1$, and $m=3.5$ and manipulating the value of $n$, it is found that when $n=0.384975$, the system (\ref{eq4}) exhibits coexistence of all species. The values of $p_1$, $p_2$, and $p_3$ are computed as $p_1=-1.15274$, $p_2=-0.453689$ and $p_3=0.622965$. Consequently, $1+p_2-|p_1+p_3|=0.0165315>0$, $1-p_3^2-p_2+p_1 p_3=0.347485>0$, and $ 3-p_2-|p_1-3p_3|=0.43205>0$ supports the coexistence of all species. However, when the value of the prey refuge parameter $n$ increases and after a certain value $n$, the axial equilibrium point exhibits stability. At $n=0.884975$, all the eigenvalues of the Jacobian matrix $J_{e2}$ are 0.989261, -0.718981, and 0.885981, i.e., the modulus of all the eigenvalues is less than one, confirming the axial equilibrium's stability.

\begin{figure}[H]
     \centering
     \begin{subfigure}[F]{0.4\textwidth}
         \centering
         \includegraphics[width=1.2\textwidth]{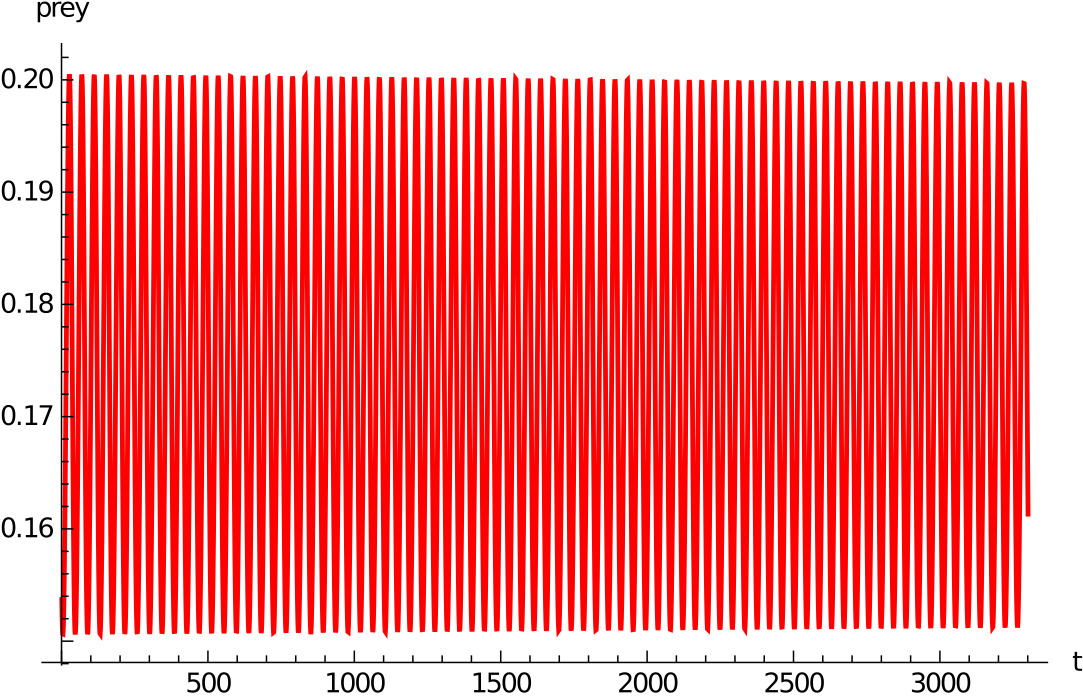}
         \caption{\emph{Time series of prey }}
         \label{fig:y equals x}
     \end{subfigure}
     \hfill
     \begin{subfigure}[F]{0.5\textwidth}
         \centering
         \includegraphics[width=\textwidth]{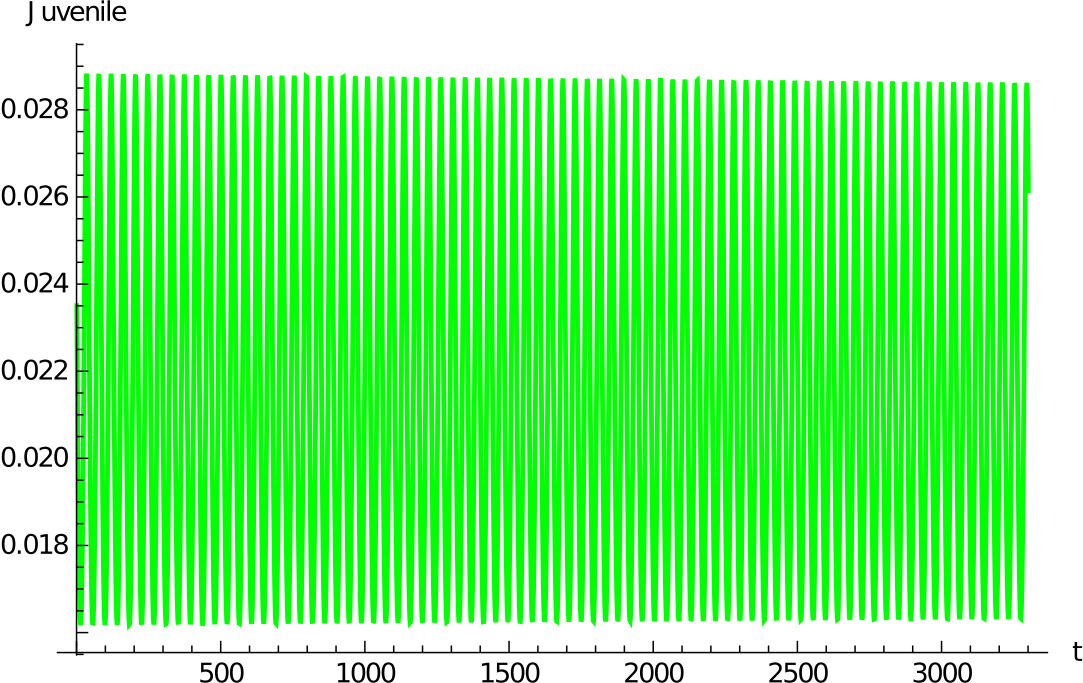}
         \caption{\emph{Time series of juvenile predator}}
     \end{subfigure}

     \begin{subfigure}[F]{0.45\textwidth}
         \centering
         \includegraphics[width=\textwidth]{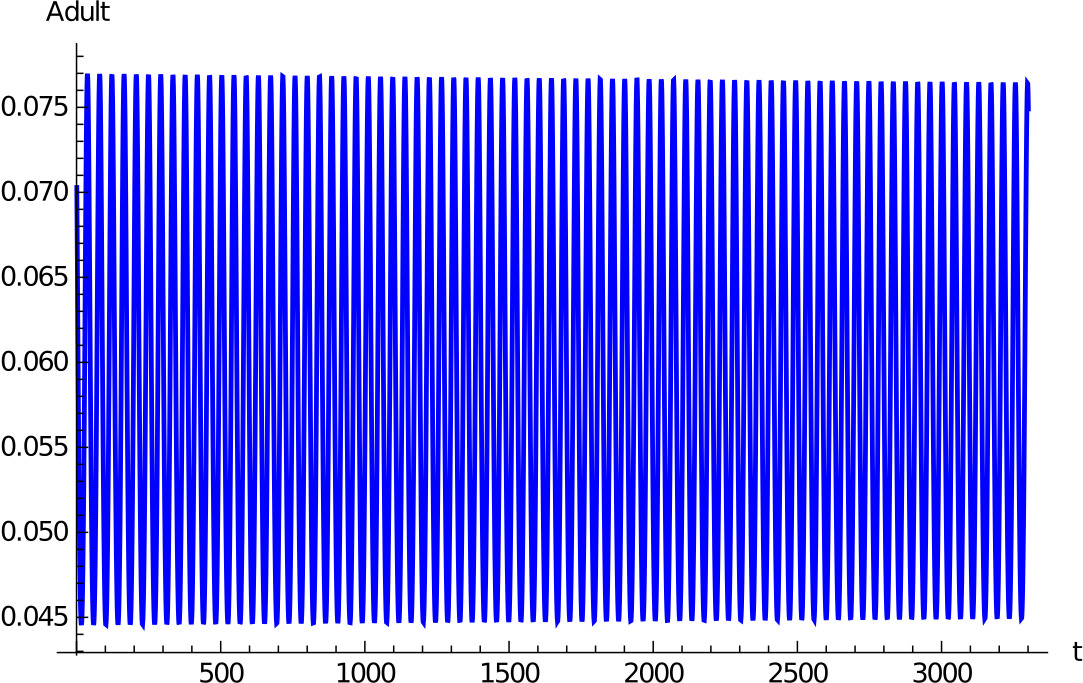}
         \caption{\emph{Time series of adult predator}}
     \end{subfigure} \hfill
      \begin{subfigure}[F]{0.45\textwidth}
         \centering
         \includegraphics[width=\textwidth]{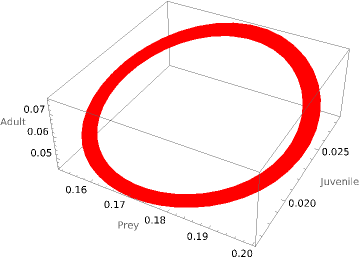}
         \caption{\emph{Phase portrait}}
     \end{subfigure}
        \caption{\emph{The occurrence of Neimark-Sacker bifurcation is shown when the prey refuge parameter $n$ is altered. The figure is generated utilising the parameter values $n=0.055535$ and $mu=3.3125$, while the remaining parameters are maintained in accordance with the values indicated in table (\ref{t1}).}}

        \label{fig5}
        \end{figure}

\section{Conclusion} The primary objective of our study is to investigate a predator-prey model incorporating stage structure with juvenile hunting, paying special attention to the adverse effects of prey refuge behaviour on their own population dynamics, utilising a discrete-time mathematical model that accurately represents this specific scenario, taking into account nonoverlapping generation for the species under consideration. The model under investigation in this paper is a discrete counterpart of the continuous model proposed by Kaushik et al. \cite{motive1} with certain modifications made to the original model. In this paper, the existence conditions of all ecologically relevant fixed points are found and  a stability analysis of these fixed points are done. Under certain parametric conditions, it is observed that both the axial fixed point and the coexisting fixed point exhibit stability. A comprehensive bifurcation analysis is performed. It is found that the population dynamics in this model are significantly influenced by the intrinsic growth rate of the prey ($r$). Different bifurcations of codimension 1, such as the Neimark-Sacker bifurcation and the period-doubling bifurcation, can be observed when the growth rate of the prey-related parameter $r$ is varied.  The figures (\ref{fig3}) and (\ref{fig4}) illustrate the manifestation of the Neimark-Sacker bifurcation and the period-doubling bifurcation, respectively, in relation to the parameter $r$. Furthermore, the parameter $n$, which is related to the prey refuge, plays a crucial role in sustaining population stability within the system under study.  It is observed that varying the value of the parameter $n$ can result in the coexistence of all species or the extermination of predator species. The occurrence of a Neimark-Sacker bifurcation is also observed, leading to the destabilisation of the system when the value of the prey refuge parameter varies, as portrayed in the figure (\ref{fig5}). These illustrate the significance of prey refuge in the system under consideration. Various numerical simulations and graphical representations are presented in this paper to demonstrate the intricate dynamics of the system model.

\end{document}